\newtheorem{theorem}{Theorem}
\newtheorem{lemma}{Lemma}
\theoremstyle{definition}
\numberwithin{equation}{subsection}
\DeclareMathOperator{\Span}{span}
\DeclareMathOperator{\rank}{rank}
\title{Low Tensor Rank Learning of Neural Dynamics}
\author{
    Arthur Pellegrino\\
    School of Informatics\\
    University of Edinburgh\\
    \texttt{pellegrino.arthur@ed.ac.uk} \\
\AND
    N Alex Cayco-Gajic\footnotemark[1] \\
    Département d'Etudes Cognitives \\
    Ecole Normale Supérieure \\
    \texttt{natasha.cayco.gajic@ens.fr} \\
\And
    Angus Chadwick\thanks{These last authors contributed equally.} \\
    School of Informatics\\
    University of Edinburgh\\
\texttt{angus.chadwick@ed.ac.uk} \\
}
\begin{document}

\maketitle
\setcounter{footnote}{0} 

\begin{abstract}
Learning relies on coordinated synaptic changes in recurrently connected populations of neurons. Therefore, understanding the collective evolution of synaptic connectivity over learning is a key challenge in neuroscience and machine learning. In particular, recent work has shown that the weight matrices of task-trained RNNs are typically low rank, but how this low rank structure unfolds over learning is unknown. To address this, we investigate the rank of the 3-tensor formed by the weight matrices throughout learning. By fitting RNNs of varying rank to large-scale neural recordings during a motor learning task, we find that the inferred weights are low-tensor-rank and therefore evolve over a fixed low-dimensional subspace throughout the entire course of learning. We next validate the observation of low-tensor-rank learning on an RNN trained to solve the same task. Finally, we present a set of mathematical results bounding the matrix and tensor ranks of gradient descent learning dynamics which show that low-tensor-rank weights emerge naturally in RNNs trained to solve low-dimensional tasks. Taken together, our findings provide insight on the evolution of population connectivity over learning in both biological and artificial neural networks, and enable reverse engineering of learning-induced changes in recurrent dynamics from large-scale neural recordings.

\end{abstract}

\section{Introduction}



Populations of neurons perform tasks through their dynamics, and these computations can be understood through the lens of recurrent neural networks (RNNs) \cite{maheswaranathan2019universality,vyas2020computation}. Recent work has shown that RNNs trained on idealized versions of behavioural tasks from neuroscience experiments can be reverse-engineered to better understand the dynamical principles by which they perform tasks \citep{mante2013context,wang2018flexible,cueva2019emergence,russo2020neural,dubreuil2022role}. RNNs can also be fitted to neural data to infer the latent dynamics that drive neural activity in specific tasks \citep{perich2020inferring,zhu2021deep,valente2022extracting,jude2022robust}. However, an understanding of task learning in the brain requires methods to understand how these latent dynamics evolve over trials, and to map these computational changes to learning-induced changes in neural connectivity \citep{duncker2020organizing,gurnani2023signatures}. For example, it has been observed that gradient descent in task-trained RNNs tends to drive low-rank weight updates \citep{schuessler2020interplay}. Yet the structure of learning dynamics itself remains largely unknown, both in the context of gradient-based optimization of neural networks and biological learning in neural systems.

To address this question, we consider how RNN dynamics evolve as a result of structured changes in connectivity over learning. Specifically, we consider changes in RNN connectivity over multiple trials by assuming that the weight tensor, i.e. the 3-tensor formed by stacking the weight matrix over all trials, has low tensor rank. This allows us to identify how distinct components of the weight matrix vary over trials while simultaneously restricting the parameter complexity and benefiting from the interpretable framework of low (matrix) rank RNNs at each trial \citep{mastrogiuseppe2018linking}. Furthermore, by imposing a constraint on the covariance of the trial factors we are also able to separate smooth changes over learning from condition-specific variability on individual trials. In contrast to classic tensor decomposition methods (such as PARAFAC), which instead constrain the neural activity itself to be low rank, \emph{low-tensor-rank RNNs} (ltrRNNs) capture high-dimensional neural activity resulting from nonlinear neural dynamics, while preserving the interpretability of these linear methods through having interpretable low-dimensional factors.

\textbf{Main contributions.} We first apply this method to neural data during a motor learning task \cite{perich2018neural} to show that the resulting neural activity can be captured with surprisingly low-tensor-rank weight updates. Next, we aim to find intuition for the observation that learning can be low tensor rank by turning to gradient-based optimization, which has recently been shown to be able to explain changes in neural activity patterns over motor learning \cite{feulner2022small,humphreys2022bci}. Towards this end, we show numerically that an RNN that is trained to perform the same task also results in low-tensor-rank learning dynamics. Finally, we provide analytical intuition in the form of upper bounds on both the matrix rank and the tensor rank of gradient-based optimization in RNNs. Ultimately these results provide evidence for low tensor rank learning structure of neural dynamics both in the brain and in RNNs.

\section{Low tensor rank learning}\label{section::ltrRNN}

\textbf{The weight tensor.} Learning in recurrent neural networks, both biological and artificial, involves the continual update of a weight matrix $W\in\mathbb{R}^{N\times N}$, where each element $W_{ij}$ represents the connectivity from neuron $j$ to $i$. Here, we consider a set of $K$ discrete samples of the weight matrix over learning: $W^{(k+1)} = W^{(k)} + \Delta W^{(k)}$ for $k\in\{0,...,K-1\}$. In neuroscience, $\Delta W^{(k)}$ could represent plasticity-induced changes in connectivity strength following each trial in the experiment (e.g., a perceptual decision or motor action), whereas in machine learning they could represent weight updates arising from the application of a given learning rule to a set of training data. Learning can then be summarised by the \emph{weight tensor} $\mathbf{W} \in \mathbb{R}^{N\times N \times K}$ formed by stacking together the weight matrices $W^{(k)}$ on all trials\footnote{For consistency with our analysis of neural data, we use the terminology \textit{trial} throughout to describe the index of the weight samples $k$, denoted in superscript (cf. neuron and time indices denoted in subscript).}, i.e., $\mathbf{W}_{ijk} = W^{(k)}_{ij}$. 

\begin{wrapfigure}{r}{0.59\textwidth}
    \vspace{-12pt}\includegraphics[width=\linewidth]{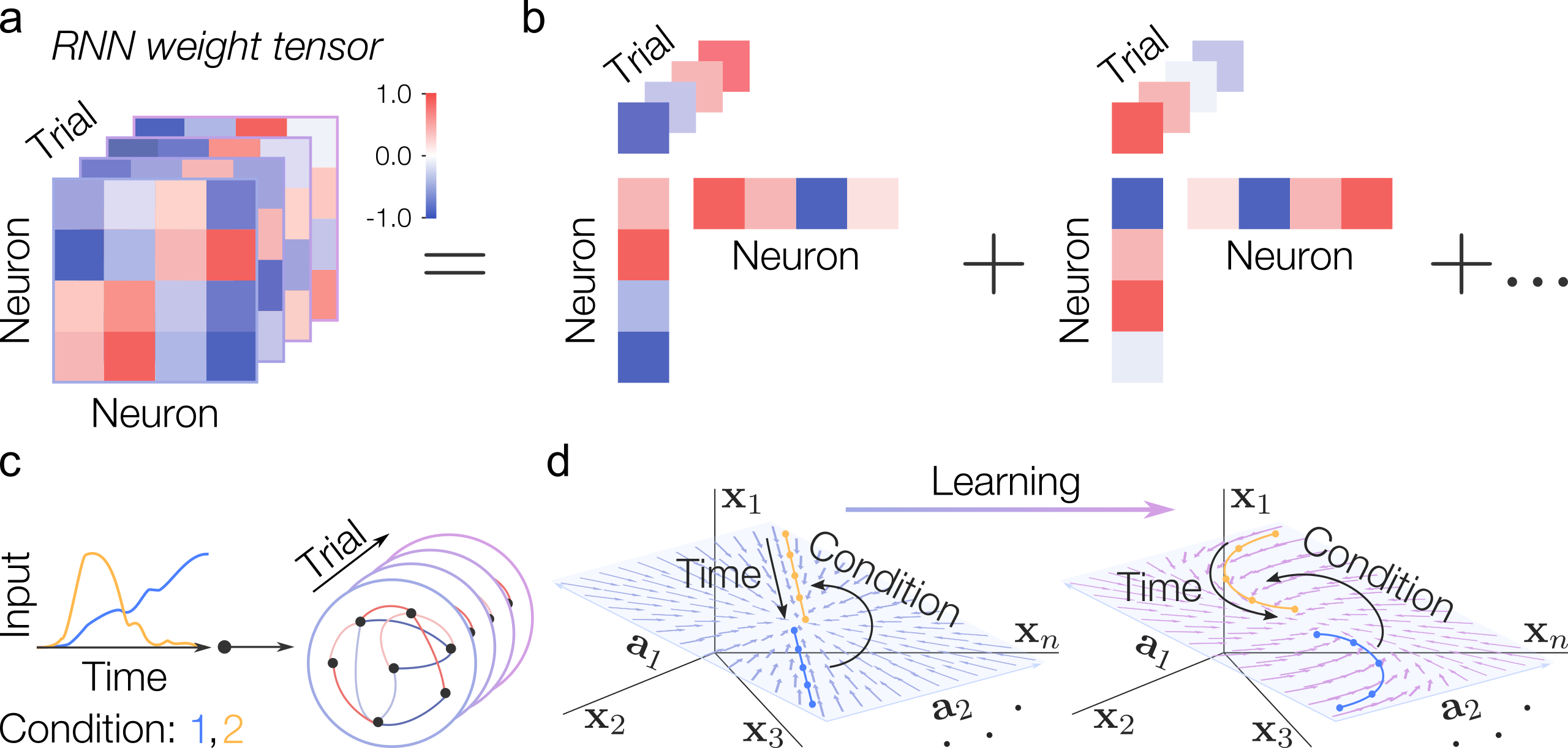}
    \vspace{-10pt}
    \caption{\textbf{Low tensor rank recurrent neural networks}. \\\textbf{a.} We consider the 3-tensor formed by stacking the weights over learning. \textbf{b.} Constraining the weight tensor to be low rank allows for an interpretable analysis of the evolution of the weights over learning. \textbf{c}. LtrRNNs partition neural variability into task condition-specific inputs and learning-related weight changes over trials. \textbf{d}. LtrRNNs capture changes in dynamics constrained to a low-dimensional subspace, which reshape neural representations over learning.}
    \vspace{-62pt}\label{fig:ltr_rnn}
\end{wrapfigure}

Here, we investigate the multilinear structure of the weight tensor $\mathbf{W}$ in order to gain insight into the relationship between learning, recurrent weights, and network dynamics (Fig. \ref{fig:ltr_rnn}a). In particular, given recent studies suggesting that trained RNNs are typically low (matrix) rank \citep{schuessler2020interplay}, we ask whether the weight tensor is {\it low tensor rank}, i.e., if the weight tensor can be written as a sum of $R$ rank-1 components: 
\begin{align*}
\mathbf{W} &= \sum_{r=1}^R \mathbf{a}_r\otimes \mathbf{b}_r \otimes \mathbf{c}_r
\\W_{ij}^{(k)} &= \sum_{ r=1}^{R}  \underbrace{c_r^{(k)}(\mathbf{a}_r\otimes \mathbf{b}_r)_{ij}}_{=a_{r,i} b_{r,j} c_r^{(k)}}
\end{align*}
where $\mathbf{a}_r, \mathbf{b}_r \in \mathbb{R}^N$ and $\mathbf{c}_r \in \mathbb{R}^K$ for $r\in\{1,\dots,R\}$ (Fig. \ref{fig:ltr_rnn}b).  A consequence of the low tensor rank assumption is that the weights $\text{vec}(W^{(k)})$ must evolve within an $R$-dimensional subspace of $\mathbb{R}^{N^2}$ spanned by the set of rank-1 matrices $\{\text{vec}(\mathbf{a}_r\otimes\mathbf{b}_r)\}_{r=1}^R$.

\textbf{Low tensor rank RNNs.}
To investigate the structure of weight tensors formed over learning in recurrent networks, we consider a continuous-time RNN (Fig. \ref{fig:ltr_rnn}c). The RNN dynamics on trial $k$ are given by:
\[
\tau\mathbf{\dot x}^{(k)}= W^{(k)}\phi(\mathbf{x}^{(k)})-\mathbf{x}^{(k)} + B\mathbf{u}^{(k)}(t) 
\]
for $B\in \mathbb{R}^{N \times N_{\text{input}}},\mathbf{u}^{(k)}(t)\in \mathbb{R}^{N_{\text{input}}}$. If the low tensor rank hypothesis is satisfied, the system can be written as:%
\[
\tau\mathbf{\dot x}^{(k)}= \sum_{r=1}^R c_r^{(k)}(\mathbf{a}_r \otimes \mathbf{b}_r)\phi(\mathbf{x}^{(k)}) -\mathbf{x}^{(k)} + B\mathbf{u}^{(k)}(t) ,
\]
\begin{wrapfigure}{l}{0.4\textwidth}
    \vspace{-12pt}
    \includegraphics[width=\linewidth]{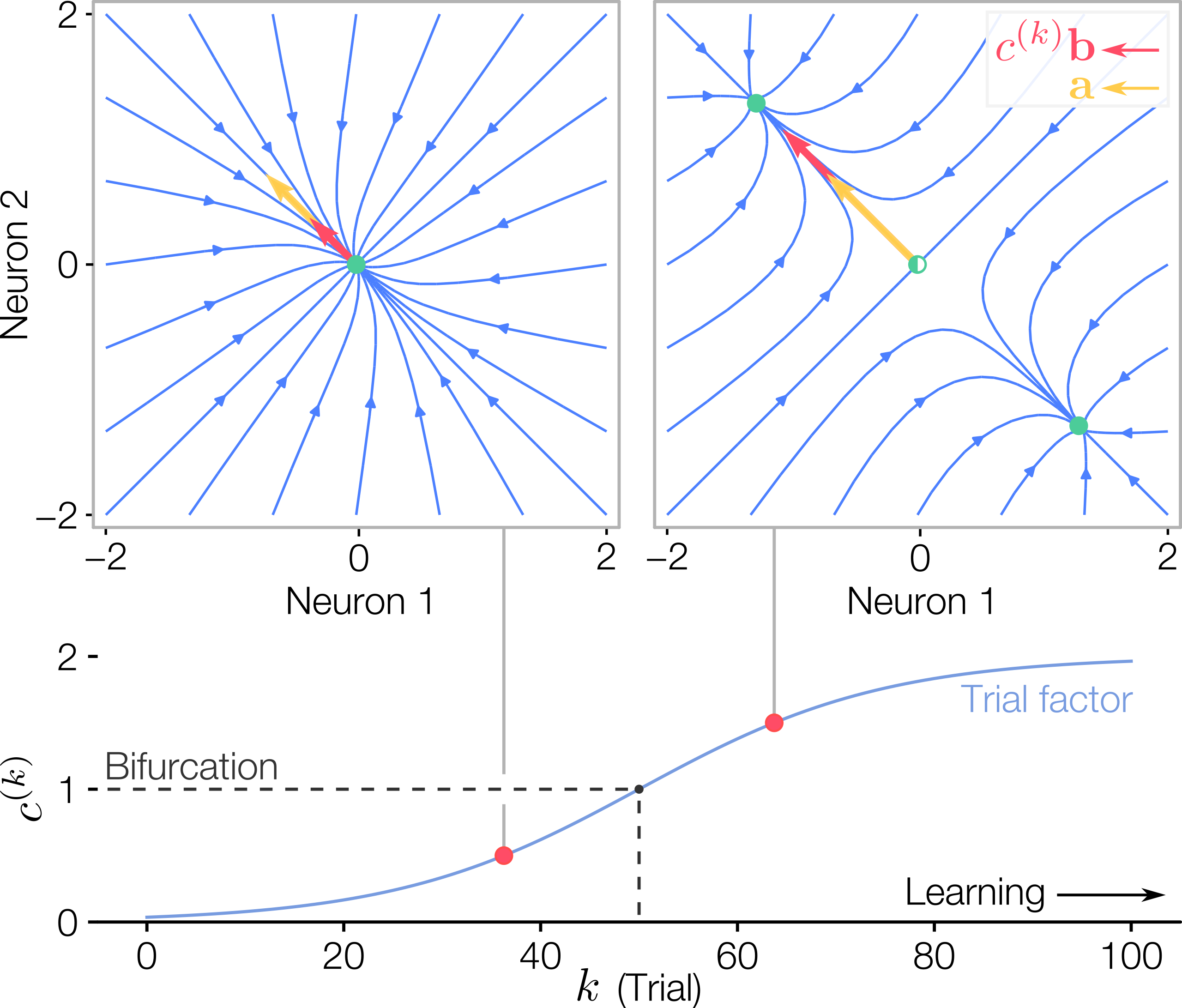}
    \vspace{-18pt}
    \caption{\textbf{Bifurcation in the flow field of an ltrRNN}. Here $\mathbf{a}=\mathbf{b}=[-1/\sqrt{2},1/\sqrt{2}]^T$, so that the weights at any given trial $k$ are, $W^{(k)}=\frac{c^{(k)}}{2}\left[ \begin{smallmatrix}1 & -1 \\ -1 & 1\end{smallmatrix}\right]$. A supercritical pitchfork bifurcation occurs at  $c^{(k)}=1$ when $\phi=\tanh$; for $\phi=$id (Sup. Fig. \ref{sup_fig:bifurcation}) a line attractor emerges at $c^{(k)}=1$.}
    \label{fig:bifurcation}
    \vspace{-22pt}
\end{wrapfigure}
which reduces to a low-rank RNN on every trial \citep{mastrogiuseppe2018linking}. As a result,
$\mathbf{x}^{(k)}(t)$ is constrained to an $(R+N_\text{input})$-dimensional subspace of $\mathbb{R}^N$ spanned by $\{\mathbf{a}_r\}_{r=1}^R \cup \{{B}_i\}_{i=1}^{N_\text{input}}$, and this subspace is fixed across trials (Fig. \ref{fig:ltr_rnn}d). 

The weight matrix evolves across trials via a simple rescaling of the rank-1 components by the trial factors $c_r^{(k)}$. This may at first appear to restrict the possible changes in dynamics over trials to simple scalings of the flow field along the directions determined by the corresponding $\mathbf{a}_r \otimes \mathbf{b}_r$ components of the weights. In fact, non-trivial changes in the flow field (e.g., bifurcations) can occur due to the nonlinear activation $\phi(\mathbf{x})$ (Fig. \ref{fig:bifurcation}; Supplementary Material A). Moreover, even in the case of a linear ltrRNN ($\phi=$id), a rich repertoire of dynamics is possible. First, the system can switch between different vector fields by allowing the corresponding trial factors to transition to zero (or nonzero) values. More generally, the eigenvalues are non-trivially related to the $\mathbf{c}_r$'s. The real eigenvalues can be shown to satisfy the following equality (Supplementary Material A): 
$\lambda_r = \frac{1}{2} \cos(\theta_{rr}^{\tilde{\mathbf{v}}, \mathbf{v}})^{-1} \sum_{r'} c_{r'}   [ \cos (\theta^{\tilde{\mathbf{v}},\mathbf{a}}_{rr'}  + \theta^{\mathbf{v},\mathbf{b}}_{rr'}) + \cos (\theta^{\tilde{\mathbf{v}},\mathbf{a}}_{rr'}  - \theta^{\mathbf{v},\mathbf{b}}_{rr'})]$,  where $\tilde{\mathbf{v}}_{r}$ and $\mathbf{v}_{r}$ are the $r$th left and right eigenvectors and $\theta^{\mathbf{x},\mathbf{y}}_{rr'}$ is the angle between the $r$th $\mathbf{x}$ and $r'$th $\mathbf{y}$. This non-trivial relationship between the eigendecomposition and tensor rank decomposition endows considerable flexibility in terms of the possible changes in dynamics over trials, depending on the relative orientations of $\{\mathbf{a}_{r'}\}_{r'=1}^R$ and $\{\mathbf{b}_{r'}\}_{r'=1}^R$ as well as the angle $\theta_{rr}^{\tilde{\mathbf{v}}, \mathbf{v}}$ which quantifies the non-normality of the resulting weight matrix.

\textbf{Fitting ltrRNNs to data.}
These results suggest that ltrRNNs comprise a highly flexible and expressive class of neural networks for relating the changes in recurrent dynamics to low-dimensional structure of the weight updates. We next considered how ltrRNNs could be fit to a dataset (for example, the activations of an RNN over the course of training, or recordings from a population of neurons as an animal learns to perform a task) in order to elicit a low tensor rank description of the weight updates governing the evolution of the dynamics over trials.

\begin{wrapfigure}{r}{0.4\textwidth}
    \vspace{-12pt}
    \includegraphics[width=\linewidth]{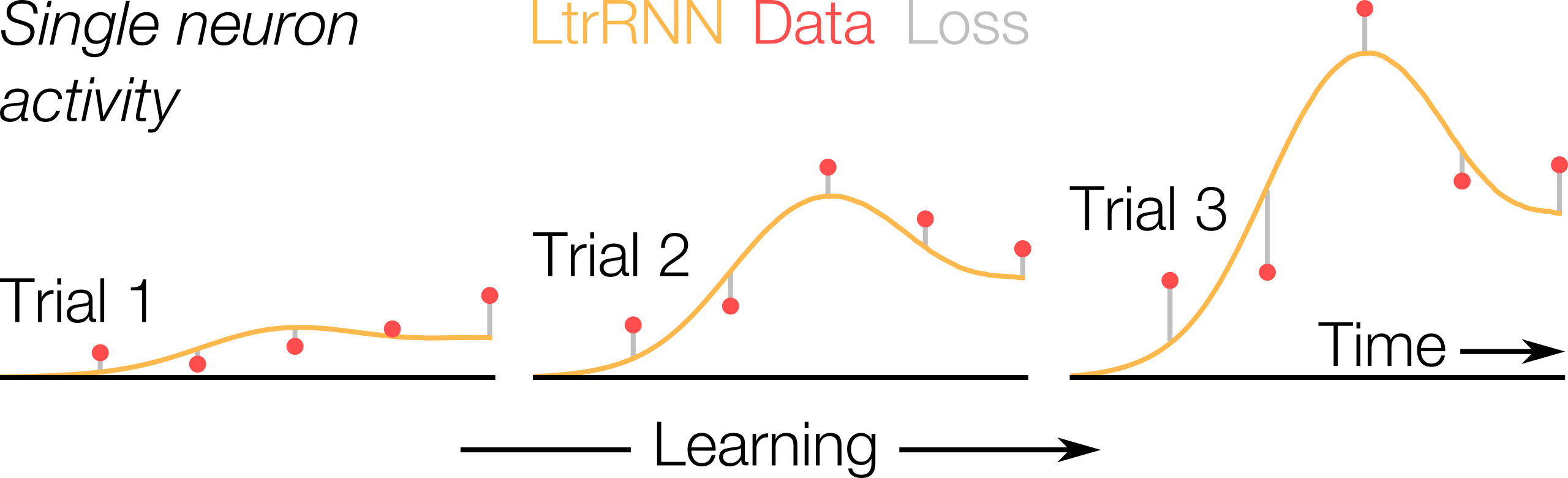}
    \vspace{-13pt}
    \caption{\textbf{LtrRNN fit to neural data}. \\LtrRNNs can be fitted to neural population data (here shown for a single neuron and for three trials), thus capturing smooth changes in activity over learning.}
    \label{fig:fitting-data}
    \vspace{-14pt}
\end{wrapfigure}

We consider a dataset in the form of a tensor $\mathbf{Y} \in \mathbb{R}^{N_\text{data} \times T \times K}$, comprising the activity of $N_\text{data}$ neurons measured on $K$ trials, each of duration $T$ time points, with single trial activations $\mathbf{y}^{(k)}(t)$. To fit the ltrRNN to such a tensor, we minimise the loss function $L(\mathbf{a}, \mathbf{b}, \mathbf{c}, B, M) = \sum_{t,k}^{T,K} \lVert M \phi(\mathbf{x}^{(k)}(t)) - \mathbf{y}^{(k)}(t) \rVert^2$, where $M \in \mathbb{R}^{N_\text{data} \times N}$ is a readout matrix mapping the activation of the ltrRNN units onto the neural data. The loss function is minimised directly via gradient descent with respect to the parameters of a rank $R$ tensor decomposition of $\mathbf{W}$.

Above we considered how the low tensor rank assumption constrains the possible changes in RNN dynamics for the case in which the $\mathbf{c}_r$ can change arbitrarily across trials. However, here we are interested specifically in understanding how these dynamics change over the course of {\it learning} which typically imposes additional structure on the $\mathbf{c}_r$'s. Here, we add an inductive bias that the components determining the weights change smoothly over trials, which we impose by parameterizing the trial factors as $\mathbf{c}_r = (L+\sigma I)\mathbf{\bar c}_r$ where $LL^T=A$ is the Cholesky decomposition of the smooth covariance matrix $A$ (assuming $\mathbf{\bar c}_r \sim \mathcal{N}(0, I)$, see Supplementary Material A). In contrast, we constrain the inputs to the RNN $\mathbf{u}^{(k)}(t)$ to depend only on {\it task condition} (i.e., the type of trial, such as the stimulus presented or set of instructions given), thus assuming that most of the condition-independent inter-trial variability in the data comes from learning-induced changes in the dynamics of the system rather than changes in its inputs. By partitioning the variability in this way, we ensure that the activations $\mathbf{x}^{(k)}(t)$, as well as the weights $\text{vec}(W^{(k)})$, evolve in low-dimensional subspaces of $\mathbb{R}^N$ and $\mathbb{R}^{N^2}$, respectively, throughout the course of learning. In the following sections we ask how accurate this low-dimensional view of learning is in the context of both biological and gradient-based learning.

\section{Related work}

LtrRNNs integrate a broad range of concepts in neuroscience and machine learning, including low (matrix) rank RNNs, tensor decompositions, fitting RNNs to neural data, and analytical studies of gradient descent dynamics. Here we review the relationship between ltrRNNs and previous results across these domains.
 
{\bf Inferring weights from neural activity.} A diverse range of methods have historically been used to infer synaptic connectivity from neural recordings \cite{de2018connectivity}. Our work specifically builds upon a recent line of investigation attempting to infer the recurrently generated dynamics $\mathbf{\dot x}=f(\mathbf{x}, W)$ of the network from samples of its trajectories \citep{duncker2021dynamics}. \cite{khan2018distinct} and \cite{chadwick2023learning} inferred learning-induced changes in population dynamics from neural data, but this approach was limited to pre-post comparisons at two timepoints in learning. Others have developed methods to infer the learning rules governing weight updates from post-learning neural activity \cite{lim2015inferring}, or spike train recordings \cite{Stevenson2011inferring, Lindermann2014framework}. However, no previous study has directly inferred the evolution of weights over the course of learning in neural data. Our low-tensor-rank approach enables this inference by constraining the parameter complexity, which allows more efficient application to neural data. Moreover, in contrast to previous approaches \cite{khan2018distinct, chadwick2023learning} our method ensures that the dynamics at different phases of learning remain jointly interpretable due to the existence of a stable latent space in which the network activity unfolds. 

{\bf Low rank RNNs.} Previous work has introduced low rank RNNs as a powerful framework to uncover the low-dimensional dynamics underlying task performance \citep{mastrogiuseppe2018linking,dubreuil2022role}, and which can be inferred from neural population data \citep{valente2022extracting}. In the context of learning, one could naively fit a separate low rank RNN to data recorded on each trial, but there is no guarantee that the resulting $\mathcal{O}(NK)$ parameters could be related to each other. By instead assuming the weights have low tensor rank structure, we ensure that smooth changes in the dynamics can be mapped over trials with $\mathcal{O}(N+K)$ parameters, while benefiting from the low rank RNN framework on each trial.

{\bf The dynamics of learning.} Recent work has investigated the dynamics of learning via gradient descent in artificial neural networks \citep{saxe2013exact, schuessler2020interplay}. \cite{saxe2013exact} showed that deep linear networks progressively learn the leading singular values of the input-output covariance matrix, which naturally leads to low rank weight matrices when the input-output mapping is itself low rank. \cite{schuessler2020interplay} extended these findings to show that the weight updates of an RNN are low rank. However, the proof assumed an infinite-dimensional linear dynamical system at steady-state and with Gaussian-distributed weights. Using adjoint sensitivity analysis, our work extends this result by deriving bounds on the matrix rank on the weight updates of finite-dimensional nonlinear networks away from steady-state. Furthermore, to investigate the evolution of weights over learning, we provide bounds on the tensor rank of learning.

{\bf Tensor decomposition methods.} Our framework can be used to infer low dimensional latent structure in neural data. An alternative approach is to directly fit a low rank approximation of the \textit{neuron} $\times $ \textit{time} $\times$ \textit{trial} data tensor
\citep{kolda2009tensor,williams2018unsupervised,soulat2021probabilistic,pellegrino2023disentangling}. Such linear methods have become increasingly popular because of their interpretability but have two key shortcomings: first, neural representations are generally nonlinearly embedded due to their dynamics and task structure \citep{chaudhuri2019intrinsic, gardner2022toroidal,beiran2023parametric}. Second, these methods do not provide insight into the dynamics underlying neural computations. Here, we address both issues by parameterizing the dynamics themselves as being low tensor rank, which allows changes in the dynamics to be mapped directly while also enabling RNN activity over trials to be visualized in a fixed subspace spanned by the weight tensor column factors. 

\newpage

\section{LtrRNN dynamics capture neural activity during motor learning}\label{section:perich}

To test whether neural population dynamics during learning are consistent with a low tensor rank framework, we fit ltrRNNs of varying rank to recordings from the motor and premotor cortex of the macaque during a motor learning task in which the subject must adapt to a force field perturbation in order to reach the target endpoint \citep{perich2018neural} (Fig. \ref{fig:perich}a). Following evidence that motor cortical initial states are set by upstream regions during the preparatory period \cite{perich2018neural,sun2022cortical}, we parameterize $u^{(k)}(t)$ with a neural ODE (see Supplementary Material B) with solution $\tilde{u}^{(i)}$ on a trial of condition $i$ during motor preparation, after which the dynamics evolved autonomously. As preprocessing, we first Gaussian filtered the spiking activity of each neuron (std $= 40$ ms) to obtain smooth estimates of the instantaneous firing rates. To compare across models, blocks of consecutive entries in time and trials were held-out for cross validation, and the remaining entries of the data tensor were used to fit the parameters of each ltrRNN model. For comparison we also fit a full tensor RNN (i.e. all $N^2K$ entries of $\mathbf{W}$ were fitted), as well as full and low matrix rank RNNs whose weights stayed constant over all trials. We quantified model performance as the unexplained variance on the interiors of the held out blocks while discarding borders to reduce temporal correlations between the train and test set (Fig. \ref{fig:perich}b inset, Supplementary Material B).

\begin{figure}[ht]
    \centering
    \includegraphics[width=\textwidth]{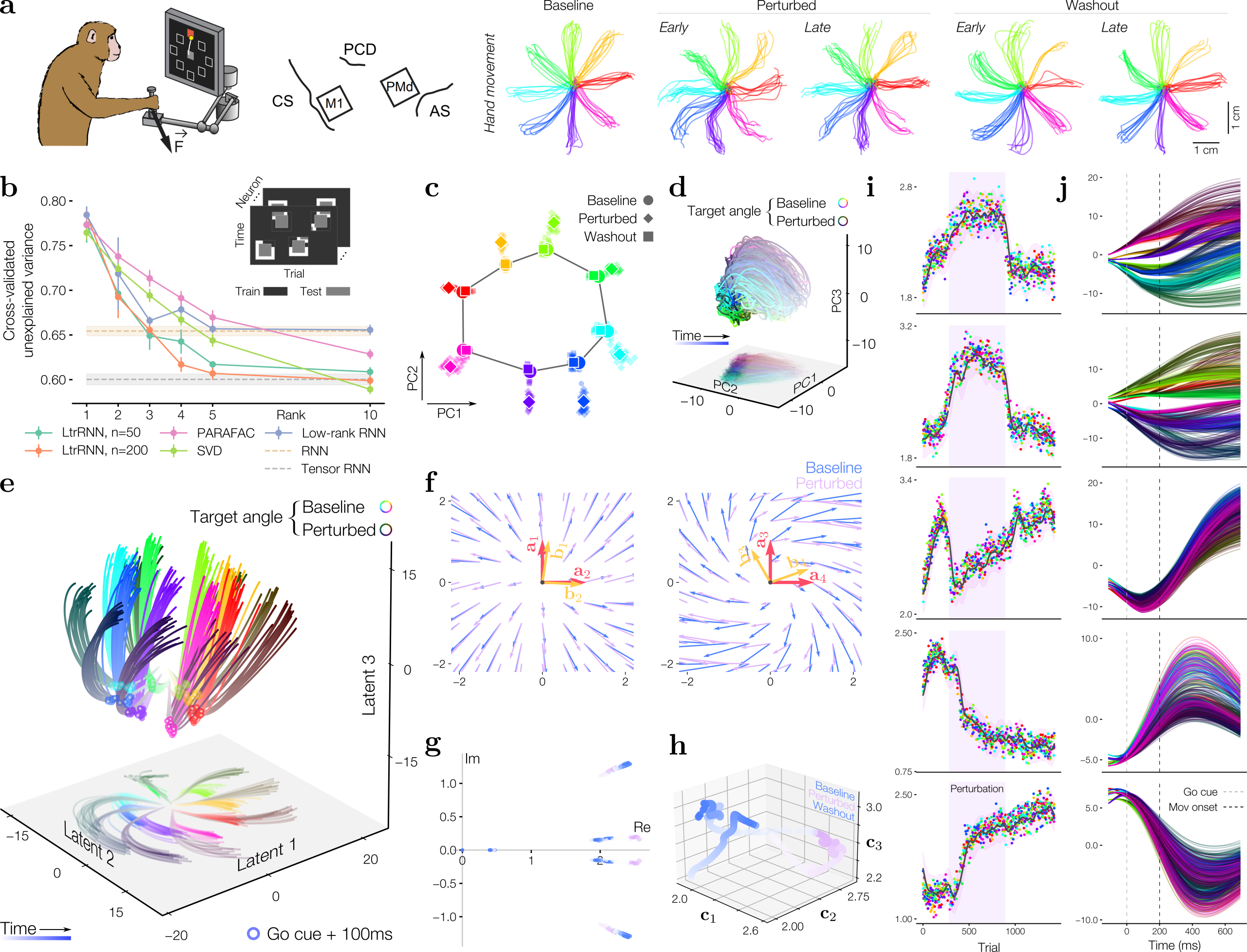}
    \caption{\textbf{LtrRNN separates learning- from condition-related variability in motor neural data}. 
    \\\textbf{a.} We apply ltrRNN to recordings of the motor and premotor cortex during learning of a perturbed reach task. (Schematic adapted from \cite{perich2018neural}).
    \textbf{b}. We hold out for testing blocks of trials and time points ($100$ ms by $50$ trials). This reduces temporal correlations between the train and test sets as compared to holding out individual tensor entries. 
    \textbf{c}. State of the ltrRNN ($R = 5$) $100$ ms after the go cue. LtrRNN captures the topology of the task. Furthermore, different initial states seem to emerge during the perturbation period. 
    \textbf{d}. PCA directly applied to the neural data. 
    \textbf{e}. Activity of the ltrRNN projected on the column vectors of the first three components. 
    \textbf{f}. Vector field along pairs of column vectors. We constrain the activity to the columns of the tensor such that $\mathbf{x}=q_r\mathbf{a}_r+q_{r'}\mathbf{a}_{r'}$ for $q_r,q_{r'}\in [-2,2]$ and we compute the vector field $\mathbf{a}_r(\phi(\mathbf{x})\cdot \mathbf{b}_r)+\mathbf{a}_{r'}(\phi(\mathbf{x})\cdot \mathbf{b}_{r'})$. 
    \textbf{g}. Eigenspectrum of $W^{(k)}$. Saturation gradient indicates trial. 
    \textbf{h}. The trial factors $\mathbf{c}_r$ can be seen as a latent variable of learning, such that they describe the evolution of the weights in a low-dimensional subspace of the weight space spanned by $\mathbf{a}_r \otimes \mathbf{b}_r$. 
    \textbf{i}. Trial factors. The black line is computed using $\sigma=0$. 
    \textbf{j}. LtrRNN activity projected onto each column factor (ordered as in \textbf{i}). In particular the activity along the top two components seem to be more specific to reach angle.}
    \label{fig:perich}
\end{figure}

Interestingly, when we compared the performance of the full tensor RNN to a static RNN, we found only a $\sim 5\%$ difference in the variance explained, indicating that much of the variability in the data is determined by task condition and dynamics, with the difference in performance attributable to learning-induced changes \citep{perich2018neural}. However, of this remaining variability, a ltrRNN with only 5 components was able to achieve similar performance to the full tensor RNN (Fig. \ref{fig:perich}b), supporting the hypothesis that learning dynamics are low rank. Using this cross-validation procedure also allowed us to compare the performance of the ltrRNN model with low-rank matrix and tensor decompositions which do not fit the underlying nonlinear neural dynamics. We found that for low ranks, our method outperformed truncated SVD (applied to the trial-concatenated data) and PARAFAC (Fig. \ref{fig:perich}b). Interestingly, in the case that $\phi=$id, the activity of the ltrRNN is constrained to an $R$-dimensional subspace, therefore the MSE (without cross-validation) is lower-bounded by that of the rank $R$ SVD by the Eckart-Young theorem. This suggests that the the higher performance of ltrRNN compared to PCA is due to the nonlinear mapping from the membrane potential space to the firing rate space.

Since after $t=0$ the network evolved autonomously, all of the information regarding its trajectory was contained in the the recurrent dynamics and initial state. We find that the initial states inferred by the model reflect the topology of the task variables (Fig. \ref{fig:perich}c.). In comparison, in the perturbation block of trials there was only a small change to the initial states, consistent with the finding that the majority of the variability in the data was due to changes in task condition rather than learning (Fig. 3b, Supplementary Material B). 

Since ltrRNNs reduce to a low-rank RNN on any given trial, their membrane potential $\mathbf{x}(t)$ is constrained to lie in the space spanned by the column vectors $\mathbf{a}_r$ \citep{mastrogiuseppe2018linking}. The membrane potential can therefore be visualized via a projection onto each $\mathbf{a}_r$ to observe the low-dimensional activity of the network \citep{valente2022extracting}. We find that, compared to applying PCA directly on the neural data, ltrRNNs yield more interpretable visualizations of the condition and learning-related variability in the neural (Fig. 3d,e). Additionally, since ltrRNNs parameterize changes in dynamics, we can visualize the vector field in the subspace spanned by the column vectors. Consistent with our task-trained RNN results, and those of the literature \citep{feulner2022small}, small changes in the vector field are sufficient to account for learning the perturbation (Fig. \ref{fig:perich}f). These changes in the dynamics of the network can be easily interpreted through the trial factors. We find that the dynamics along certain directions in the membrane potential space change during learning (Fig. \ref{fig:perich}g). Furthermore, consistent with the hand kinematics, and recent experimental evidence \citep{losey2022learning}, some of these changes in the trial factors do not simply revert back to baseline during the washout period (in which the force field perturbation is removed; Fig. \ref{fig:perich}h,i). Dynamics along some columns capture target variability (Fig. \ref{fig:perich}j, top 2 components), whereas others capture mainly temporal variability (Fig. \ref{fig:perich}j, bottom 3 components). Interestingly, the trial factors which revert to baseline during the washout period are those corresponding to target-related dynamics, while those which are persistently changed are the temporal variability factors (Fig. \ref{fig:perich}i). 

Overall, we find that learning-related variability can be accounted for by low-tensor-rank changes in the recurrent dynamics of neural populations. LtrRNN allows uncovering these changes in large-scale neural data and vizualizing their effect in an interpretable fashion.

\section{Low tensor rank learning in task-trained RNNs}\label{section:ttRNN}

We next decided to test the low-tensor-rank learning hypothesis in a model in which we had direct access to the ground truth weight tensor. Towards this end, we trained an RNN with unconstrained weight structure to perform the same motor task.

\textbf{Task-trained RNN model}. 
At any given trial $k$, the RNN linearly controls the force applied to the hand:
\[\mathbf{\ddot y}^{(k)} = D\phi(\mathbf{x}^{(k)}(t)) + \bm{\kappa}^{(k)}(t)\]
where $\bm{\kappa}^{(k)}(t)\in\mathbb{R}^2$ is an Ornstein-Uhlenbeck process representing noise in the execution of movement \cite{van2004role}. To create a purely ballistic model, we provide the RNN the target position and a hold cue. For the objective we use the integrated hand to target position so that the RNN simply has to push the hand as fast as possible towards the target (Fig. \ref{fig:task_trained_rnn}a). 

After first pre-training the RNN to move the hand to the goal position at the go onset, we then probe motor learning following the same protocol as the neural data. Specifically, the RNN first performs 50 trials in the baseline condition, after which we introduce a force perturbation orthogonal and proportional to the velocity of the hand for 100 trials (Fig. \ref{fig:task_trained_rnn}b):
\[\mathbf{\ddot y}^{(k)} = D\phi(\mathbf{x}^{(k)})(t) + c R\mathbf{\dot y}^{(k)}(t) + \bm{\kappa}^{(k)}(t)\]
where $R\in\mathbb{R}^{2\times 2}$ is a $90^\circ$ clockwise rotation matrix, and $c\in\mathbb{R}$ is the coefficient of perturbation. Finally, the RNN performs the original unperturbed task for another 100 trials (washout). Throughout the perturbation and washout periods, the weights are updated following SGD. 

\begin{figure}[ht]
    \centering
    \includegraphics[width=\textwidth]{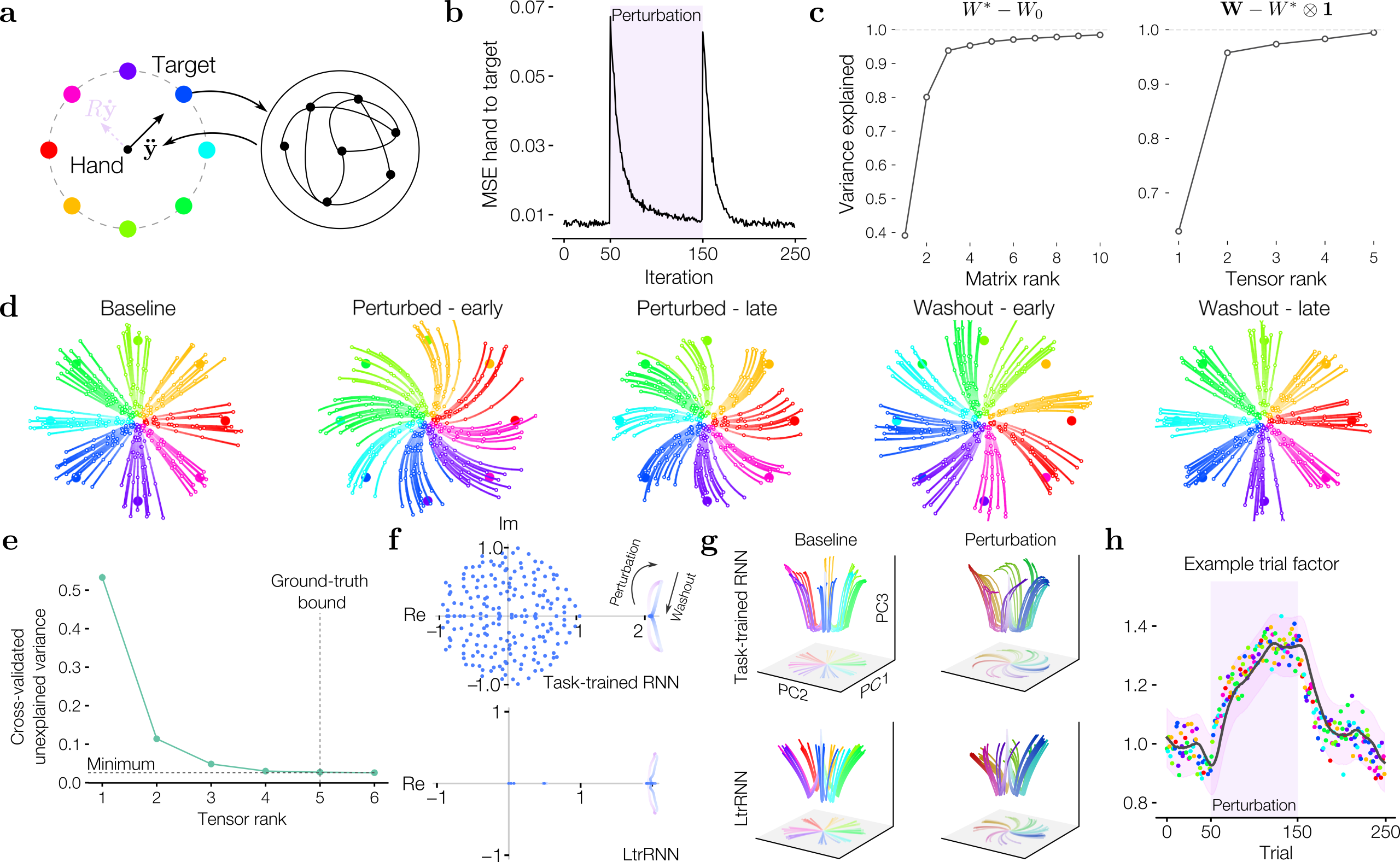}
    \caption{\textbf{Learning-induced weight changes in task-trained RNNs are low tensor rank}. 
    \textbf{a}. RNN model. 
    \textbf{b}. Average MSE between hand and target positions integrated throughout the trial. 
    \textbf{c}. Left: Variance explained of the weights resulting from pre-training $W^*$. The original task training results in a rank-3 RNN. Right: Variance explained of the tensor of updates $\mathbf{W}-W^*\otimes \mathbf{1}$ due to retraining. The retraining procedure results in a rank-2 tensor. We further found that the subspaces spanned by the pre-training columns (resp. rows) and retraining columns (resp. rows) were different, suggesting a tensor of rank at most $5$. 
    \textbf{d}. Hand movements during various periods of learning, where ``early'' and ``late'' describe respectively the first and last trial during which the perturbation is introduced or removed. 
    \textbf{e}. Using the same cross-validation as in Fig. \ref{fig:perich} uncovers the low tensor rank structure.
    \textbf{f}. Eigenspectrum of the weights $W^{(k)}$ over learning. Top: Ground truth weights of the task-trained RNN. Bottom: Weights of the ltrRNN fit to the task-trained RNN activity. Imaginary eigenvalues emerge to counter the rotational effect of the perturbation, and are uncovered by ltrRNN.
    \textbf{g}. Activity projected on PCs. The rotational activity during perturbed trials is uncovered by ltrRNN.
    \textbf{h}. Example of trial factor uncovered by ltrRNN which correlates with learning at the level of the behaviour.
    }
    \label{fig:task_trained_rnn}
\end{figure}

We next analyzed how the structure of the weight matrix changed over the baseline, perturbation, and washout blocks. The change in weights as a result of the pre-training $W^*-W_0$, where $W_0$ is the random initialization, and $W^*$ the weights after pre-training, is of matrix rank $3$ (Fig. \ref{fig:task_trained_rnn}c, left). Since the weight updates resulting from learning the perturbation are small (consistent with the literature \cite{feulner2022small}), we cannot simply apply PARAFAC on the weights recorded during perturbation learning, as only the $3$ rank-$1$ terms from pre-training will be visible. However, running PARAFAC on the updates $\mathbf{W}-W^*\otimes \mathbf{1}$ (where $\mathbf{1}$ is the vector of ones), reveals tensor rank $2$ updates (Fig. \ref{fig:task_trained_rnn}c, right). Furthermore, the subspace spanned by the columns (or rows) resulting from pretraining is different than those resulting from perturbation learning (Supplementary Material C). 

\textbf{LtrRNN application}. We next ask whether ltrRNN can uncover this low-tensor-rank structure in the weights, especially the small changes due to perturbation learning. Using our cross-validation procedure, we find that the variance of the activity of the task-trained RNN that is unexplained by the fitted ltrRNN plateaus at tensor rank $5$ (Fig. \ref{fig:task_trained_rnn}e), consistent with our analysis of the ground truth weights. Furthermore, the weights of the ltrRNN share similar spectral properties, and a similar evolution over learning, compared to ground truth (Fig. \ref{fig:task_trained_rnn}f). In particular, in both cases, imaginary eigenvalues, corresponding to rotational activity, emerge over learning, consistent with the behaviour being rotational post-learning to counter the force field (\ref{fig:task_trained_rnn}d). The emergence of rotational activity is also visible at the level of the neural activity (Fig. \ref{fig:task_trained_rnn}g).

Therefore, consistent with the results found through ltrRNN fit to neural data, we found that the weight updates in an RNN trained on a perturbed ballistic reach task had low-tensor-rank structure. Furthermore, ltrRNN was able to uncover this structure in the weights of the task-trained RNN, and its evolution over learning.

\section{Gradient-based learning constrains the tensor rank of weight updates}

We have so far observed that learning leads to low-tensor-rank weight changes in both biological data and task-trained RNNs. To gain deeper insight into why this is the case, we next present a set of mathematical results regarding the tensor rank of gradient-based learning in RNNs. Towards this end, we use the method of the adjoint \citep{bittner1963ls, chen2018neural} to derive a dynamical system whose solution is the gradient of a loss functional with respect to the RNN weights \citep{srinivasan1994back}.

\textbf{The adjoint.} For a dynamical system $\mathbf{\dot x}=f(\mathbf{x}, \boldsymbol{\theta})$, the state adjoint of the loss functional $L:\mathbb{R}^n\rightarrow\mathbb{R}$ at a particular time point $t$ is defined as $\mathbf{a}_{\mathbf{x}}(t)=\frac{dL(\mathbf{x}(T))}{d\mathbf{x}(t)}$ where $T$ is the time of loss evaluation (but see Supplementary Material \ref{appendix:mathematical_results} for the case of a loss functional that is integrated over time). From the dynamics of $\mathbf{x}$, the state adjoint dynamics can be derived:
\[
 \mathbf{\dot a}_{\mathbf{x}}(t)=\left(\frac{df(\mathbf{x}(t))}{d\mathbf{x}(t)}\right )^T\mathbf{a}_{\mathbf{x}}(t)
\]
However, to understand gradient-based learning, we require the parameter adjoint: $\mathbf{a}_{\boldsymbol{\theta}}(t) = \frac{dL(\mathbf{x}(T))}{d\boldsymbol{\theta}(t)}$. This can be accomplished by concatenating the original dynamical system by its parameters $\mathbf{z}(t)=[\mathbf{x}(t),\boldsymbol{\theta}(t)]$ (noting that $\boldsymbol{\dot{\theta}}(t) = 0$) to define the the augmented adjoint $\mathbf{a}_\mathbf{z}(t) = [\mathbf{a}_\mathbf{x}(t), \mathbf{a}_{\boldsymbol{\theta}}(t)]$, whose dynamics can be shown to follow
\[
\mathbf{\dot a}_\mathbf{z}(t) = \left [ \left( \frac{df(\mathbf{x}(t))}{\mathbf{x}(t)}\right)^T\mathbf{a}_\mathbf{x}(t), \left(\frac{df(\mathbf{x}(t))}{d\boldsymbol{\theta}} \right)^T\mathbf{a}_\mathbf{x}(t)\right ]
\]
with terminal condition $\mathbf{a}_{\boldsymbol{z}}(T) = \left[\frac{dL(\mathbf{x}(T))}{d\mathbf{x}(T)},0\right]$. The gradient of the loss with respect to the parameters can then be found by integrating the parameter adjoint dynamics backwards through time to get $\mathbf{a}_{\boldsymbol{\theta}}(0)$. In Supplementary Material \ref{appendix:mathematical_results} we provide a short derivation of the adjoint adapted from \citep{chen2018neural}.

\textbf{Main results}. While the adjoint method is extensively used as a numerical tool in autodifferentiation packages \citep{chen2018neural,li2020scalable}, it can also provide analytical insight into the gradient dynamics of dynamical systems. Towards this end, we now return to the case of an RNN, with the parameter of interest being the weight matrix. We demonstrate several mathematical results with the aid of the adjoint.
\begin{lemma}\label{lemma:adjoint_rnn} Consider the RNN $ \mathbf{\dot x} = W\phi(\mathbf{x})-\mathbf{x}+B\mathbf{u}(t),
$ where $\mathbf{x}(t)\in\mathbb{R}^n$, $\mathbf{u}(t)\in\mathbb{R}^m$, $W\in\mathbb{R}^{n\times n}$, $B\in\mathbb{R}^{n\times m}$, and $\phi:\mathbb{R}^n\rightarrow \mathbb{R}^n$ an element-wise nonlinearity, and define a loss functional of the linearly decoded RNN state, $L(D\phi(\mathbf{x}(T)); \mathbf{y}(T))$ for $\mathbf{y}\in \mathbb{R}^d$. Furthermore, let $W=\sum_r^R \boldsymbol{\alpha}_r \otimes \boldsymbol{\beta}_r$. The adjoint dynamics are then given by,
\[
        \mathbf{\dot a}_{\mathbf{x}} = \left(\sum^R_r \boldsymbol{\alpha}_r \otimes (\boldsymbol{\beta}_r \odot \phi'(\mathbf{x}(t)))\right)^T\mathbf{a}_\mathbf{x}- \mathbf{a}_\mathbf{x},
        \hspace{1cm} \mathbf{\dot a}_{W} =  \mathbf{a}_\mathbf{x} \otimes \phi(\mathbf{x}(t)) 
\]
with terminal conditions $\mathbf{a}_\mathbf{x}(T)=\phi'(\mathbf{x}(T)) \odot \sum^d_{r'} L'(D_{r'}\cdot \phi(\mathbf{x}(T)); \mathbf{y}_{r'})D_{r'}$ and $\mathbf{a}_W(T) = 0$. \footnote{Where $\odot$ denotes the Hadamard product and $\phi'$ the element-wise derivative of $\phi$} 
\end{lemma}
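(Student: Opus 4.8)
The plan is to recognize that the lemma is a direct specialization of the augmented adjoint equations already stated in the excerpt (and derived in Supplementary Material D) to the particular RNN vector field $f(\mathbf{x}) = W\phi(\mathbf{x}) - \mathbf{x} + B\mathbf{u}(t)$, with the parameter of interest being $\boldsymbol{\theta} = W$. Thus the entire argument reduces to three bookkeeping steps: (i) differentiating $f$ with respect to the state and with respect to $W$, (ii) rewriting the state-Jacobian term using the rank-$R$ factorization $W = \sum_i \boldsymbol{\alpha}_i \otimes \boldsymbol{\beta}_i$, and (iii) pushing the loss gradient back through the read-out $\mathbf{r} \mapsto D\mathbf{r}$ and the nonlinearity $\mathbf{x} \mapsto \phi(\mathbf{x})$ to obtain the terminal condition. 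No new analytic machinery is needed beyond the general adjoint result.

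For the state adjoint, first I would compute the Jacobian of $f$. The input term $B\mathbf{u}(t)$ contributes nothing and the $-\mathbf{x}$ term contributes $-I$, while $[\partial(W\phi(\mathbf{x}))/\partial\mathbf{x}]_{ij} = W_{ij}\phi'(x_j)$, so that $\frac{df}{d\mathbf{x}} = W\operatorname{diag}(\phi'(\mathbf{x})) - I$. Substituting into $\mathbf{\dot a}_\mathbf{x} = (\frac{df}{d\mathbf{x}})^T\mathbf{a}_\mathbf{x}$ gives $\mathbf{\dot a}_\mathbf{x} = \operatorname{diag}(\phi'(\mathbf{x}))\,W^T\mathbf{a}_\mathbf{x} - \mathbf{a}_\mathbf{x}$. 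The key rewriting is then $W\operatorname{diag}(\phi'(\mathbf{x})) = \sum_i \boldsymbol{\alpha}_i\,(\operatorname{diag}(\phi'(\mathbf{x}))\boldsymbol{\beta}_i)^T = \sum_i \boldsymbol{\alpha}_i \otimes (\boldsymbol{\beta}_i \odot \phi'(\mathbf{x}))$, which is exactly the operator appearing inside the transpose in the claimed dynamics. Absorbing the diagonal factor $\operatorname{diag}(\phi'(\mathbf{x}))$ into each $\boldsymbol{\beta}_i$ as a Hadamard product is the only nontrivial algebraic manipulation here.

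For the parameter adjoint I would note that $f$ depends on $W$ only through the bilinear term $W\phi(\mathbf{x})$, so $\partial f_i / \partial W_{kl} = \delta_{ik}\,\phi(x_l)$; contracting against $\mathbf{a}_\mathbf{x}$ as in $\mathbf{\dot a}_W = (\frac{df}{dW})^T\mathbf{a}_\mathbf{x}$ collapses to $(\mathbf{\dot a}_W)_{kl} = (\mathbf{a}_\mathbf{x})_k\,\phi(x_l)$, i.e. $\mathbf{\dot a}_W = \mathbf{a}_\mathbf{x} \otimes \phi(\mathbf{x})$. The terminal conditions follow from $\mathbf{a}_{\boldsymbol z}(T) = [\,dL/d\mathbf{x}(T),\,0\,]$: the $W$-block is $0$, while for the state block I would apply the chain rule through the composition $\mathbf{x} \mapsto \phi(\mathbf{x}) \mapsto D\phi(\mathbf{x}) \mapsto L$. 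Writing $L'_j$ for the derivative of $L$ with respect to its $j$th decoded argument evaluated at $D_j\cdot\phi(\mathbf{x}(T))$, and $D_j$ for the $j$th row of $D$, this yields $[dL/d\mathbf{x}]_p = \phi'(x_p)\sum_j L'_j D_{jp}$, i.e. $\mathbf{a}_\mathbf{x}(T) = \phi'(\mathbf{x}(T)) \odot \sum_j L'_j D_j$, matching the statement.

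The main obstacle is not conceptual but notational: being careful with transposes and with where the factor $\operatorname{diag}(\phi'(\mathbf{x}))$ lands (it multiplies $W$ from the right in the Jacobian, but reappears as a Hadamard factor on $\boldsymbol{\beta}_i$ after the rank rewrite), and keeping the two chain-rule layers in the terminal condition (through $\phi$ and through $D$) disentangled. I would also state explicitly that the general adjoint derivation from the excerpt is being invoked, so that only these vector-field-specific substitutions remain to be verified.
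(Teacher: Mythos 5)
Your proposal matches the paper's own proof in Supplementary Material D almost step for step: compute the state Jacobian and rewrite it via the rank-$R$ factorization, compute $\partial f_i/\partial W_{kl}=\delta_{ik}\phi(x_l)$ to get $\mathbf{\dot a}_W=\mathbf{a}_\mathbf{x}\otimes\phi(\mathbf{x})$, and obtain the terminal condition by the chain rule through $D$ and $\phi$ (the paper specializes to the squared-error loss at this last step, whereas you keep $L'$ generic, which is if anything closer to the lemma as stated). This is essentially the same approach and is correct.
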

From the adjoint dynamics, it can be seen that the singular values of the gradient of the loss with respect to the weights of the RNN ($\nabla_W L = \mathbf{a}_{W}(0)$) depend on the dimensionality of the subspaces over which $\phi(\mathbf{x}(t))$ and $\mathbf{a}_\mathbf{x}$ evolve. This can be used to show that the gradient's singular values are bounded by the singular values of both the activity and the adjoint, which we formalize in the following theorem. 
\begin{theorem}\label{thm:rank_gradient} 
Consider an RNN be defined as in Lemma \ref{lemma:adjoint_rnn}. Then the singular values of its gradient can be bounded as:
\[
\max\left\{\sigma_{n}^{\phi(\mathbf{x})}\sigma_r^{\mathbf{a}_\mathbf{x}},\sigma_{n}^{\mathbf{a}_\mathbf{x}}\sigma_r^{\phi(\mathbf{x})}\right\} \leq
\sigma_r^{\nabla_W L} \leq 
\min\left\{\sigma_1^{\phi(\mathbf{x})}\sigma_r^{\mathbf{a}_\mathbf{x}},\sigma_1^{\mathbf{a}_\mathbf{x}}\sigma_r^{\phi(\mathbf{x})}\right\},
\]
where $\sigma_r^\mathbf{v}$ denotes the $r$th singular value of $\mathbf{v}$.
\end{theorem}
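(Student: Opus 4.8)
The plan is to convert the bound into a statement about the singular values of a product of two time-indexed operators and then invoke the classical product inequalities for singular values. First I would integrate the weight-adjoint dynamics of Lemma~\ref{lemma:adjoint_rnn} backward from the terminal condition $\mathbf{a}_W(T)=0$ to obtain a closed form for the gradient,
\[
\nabla_W L = \mathbf{a}_W(0) = -\int_0^T \mathbf{a}_\mathbf{x}(t)\,\phi(\mathbf{x}(t))^\top\,dt .
\]
Reading $\phi(\mathbf{x}(\cdot))$ and $\mathbf{a}_\mathbf{x}(\cdot)$ as operators from $L^2([0,T])$ into $\mathbb{R}^n$, i.e. as the ``neuron $\times$ time'' trajectory matrices $\Phi$ and $A$, this identity says exactly that $\nabla_W L = -A\Phi^\top$. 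The quantities $\sigma_i^{\mathbf{a}_\mathbf{x}}$ and $\sigma_i^{\phi(\mathbf{x})}$ are then, by definition, the singular values of $A$ and $\Phi$, equivalently the square roots of the eigenvalues of the Gram matrices $\int_0^T \mathbf{a}_\mathbf{x}\mathbf{a}_\mathbf{x}^\top dt$ and $\int_0^T \phi\phi^\top dt$.

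Second, I would expose the product structure by substituting the (functional) SVDs $A = U_a\Sigma_a V_a^\top$ and $\Phi = U_\phi \Sigma_\phi V_\phi^\top$, where $U_a,U_\phi$ are orthogonal spatial factors, $\Sigma_a,\Sigma_\phi$ the diagonal singular-value matrices, and $V_a,V_\phi$ orthonormal temporal factors in $L^2([0,T])$. This gives
\[
\nabla_W L = -U_a\,\Sigma_a\,(V_a^\top V_\phi)\,\Sigma_\phi\,U_\phi^\top .
\]
Since the orthogonal factors $U_a,U_\phi$ leave singular values invariant, the problem reduces to bounding the singular values of $\Sigma_a\,G\,\Sigma_\phi$, where $G = V_a^\top V_\phi$ is the temporal overlap between the adjoint and activity singular functions. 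The structural fact I would rely on is that $G$, being the Gram matrix of two orthonormal systems, is a contraction, $\|G\|\le 1$.

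Third, the upper bound follows immediately from the standard product inequality $\sigma_i(XY)\le \sigma_1(X)\,\sigma_i(Y)$ (the case $j=1$, or $i=1$, of $\sigma_{i+j-1}(XY)\le\sigma_i(X)\sigma_j(Y)$) applied twice, together with $\|G\|\le 1$. Grouping as $\Sigma_a\,(G\Sigma_\phi)$ yields $\sigma_i \le \sigma_1(\Sigma_a)\,\sigma_i(G\Sigma_\phi)\le \sigma_1^{\mathbf{a}_\mathbf{x}}\,\sigma_i^{\phi(\mathbf{x})}$, and grouping as $(\Sigma_a G)\,\Sigma_\phi$ yields $\sigma_i\le \sigma_i^{\mathbf{a}_\mathbf{x}}\,\sigma_1^{\phi(\mathbf{x})}$; taking the smaller of the two recovers the claimed $\min\{\cdot,\cdot\}$. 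For $n=1$ this is simply Cauchy--Schwarz, $|\int a\phi\,dt|\le \|a\|_{L^2}\,\|\phi\|_{L^2}$.

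Finally, the lower bound is where I expect the main obstacle to lie. The natural route is the dual product inequality $\sigma_i(XY)\ge \sigma_{\min}(X)\,\sigma_i(Y)$, valid when $X$ is square and invertible: if the middle factor $G$ were not merely a contraction but an isometry on the relevant subspace, this would give $\sigma_i \ge \sigma_n^{\mathbf{a}_\mathbf{x}}\,\sigma_i^{\phi(\mathbf{x})}$ and, symmetrically, $\sigma_i \ge \sigma_n^{\phi(\mathbf{x})}\,\sigma_i^{\mathbf{a}_\mathbf{x}}$, matching the claimed $\max\{\cdot,\cdot\}$. The difficulty is that $G$ is only guaranteed to satisfy $\|G\|\le 1$ and may be rank-deficient: if the adjoint and activity explore nearly orthogonal temporal subspaces, the gradient can shrink (indeed vanish) while both trajectories remain large, exactly as the $n=1$ Cauchy--Schwarz case shows. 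The crux is therefore to argue that the temporal singular functions of $\mathbf{a}_\mathbf{x}$ and $\phi(\mathbf{x})$ overlap non-degenerately, i.e. that the relevant singular values of $G$ are bounded away from $0$. I would attempt to control this overlap using the fact that both $\phi(\mathbf{x})$ and the adjoint are generated by the \emph{same} trajectory $\mathbf{x}(t)$ through coupled dynamics, and would state the lower bound under whatever non-degeneracy condition on $G$ this coupling supplies.
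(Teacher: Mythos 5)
Your upper-bound argument is essentially the paper's. Both start from the closed form $\nabla_W L=\mathbf{a}_W(0)=-\int_0^T \mathbf{a}_\mathbf{x}(t)\otimes\phi(\mathbf{x}(t))\,dt$, read $\mathbf{a}_\mathbf{x}$ and $\phi(\mathbf{x})$ as finite-rank operators from $L^2([0,T])$ to $\mathbb{R}^n$, and bound the singular values of the composition $\mathbf{a}_\mathbf{x}\circ\phi(\mathbf{x})^*$ by the product inequality $\sigma_i(T_1T_2)\le\sigma_1(T_1)\,\sigma_i(T_2)$ applied in both groupings. The paper derives that inequality directly from the min-max characterization of singular values rather than passing through the two SVDs and the temporal overlap matrix $G=V_a^\top V_\phi$, but these are interchangeable phrasings of the same argument, so on the upper bound your proof is complete and faithful to the paper's. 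Where you diverge is the lower bound, and your hesitation is warranted: the paper disposes of it in one sentence, invoking the identity $\sigma_n^{T_1}\lVert T_2\mathbf{y}\rVert\le\lVert T_1T_2\mathbf{y}\rVert$ with $\sigma_n^{T_1}$ the smallest \emph{non-zero} singular value of $T_1$ --- which is exactly the step you identify as failing. That inequality requires $T_2\mathbf{y}$ to avoid $\ker T_1$, i.e. it requires the range of $\phi(\mathbf{x})^*$ to lie in the orthogonal complement of $\ker\mathbf{a}_\mathbf{x}$; since that kernel has finite codimension in the infinite-dimensional $L^2([0,T])$, nothing guarantees this, and when the temporal singular functions of the two trajectories are partially orthogonal ($G$ rank-deficient or a strict contraction) the gradient can be far smaller than $\sigma_n^{\mathbf{a}_\mathbf{x}}\sigma_i^{\phi(\mathbf{x})}$, indeed zero while both factors are full spatial rank. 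The paper does not supply the non-degeneracy condition you ask for; it only concedes that the lower bound ``does not have any numerical use'' because the smallest singular values are practically zero, which sidesteps rather than repairs the issue. So you have not proved the stated lower bound, but neither, in a rigorous sense, has the paper; your diagnosis of the obstruction is the correct one, and stating the lower bound conditionally on the overlap $G$ being an isometry on the relevant subspace is the honest fix.
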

This theorem demonstrates that there is a natural bound to the numerical rank \citep{hansen1998rank} of gradient-based learning in RNNs. Numerically, we find that the singular value spectrum of both $\phi(\mathbf{x})$ and $\mathbf{a}_\mathbf{x}$ tend to decay exponentially, even in the case of a chaotic RNN (Fig. \ref{fig:sv_spectrum}). In Supplementary Material \ref{appendix:mathematical_results} we repeat the same analysis with various activation functions, initial weight variances and ranks: overall, we find that smooth activation functions (such as the most commonly used tanh) tend to lead to the fastest decaying singular value spectrum of the adjoint.

So far, we haven't made any assumption on the architecture of the RNN such as constraints on $W$ or $\phi$. Under such constraints, stronger and more explicit bounds on both the matrix and tensor ranks can be obtained.
\begin{theorem}\label{cor:rank_lds}
Consider an RNN defined as in Lemma \ref{lemma:adjoint_rnn} with $\phi=$id and $W^{(0)}$ of rank $R$. Furthermore suppose $x^{(k)}(0)$ is constrained to the subspace spanned by the columns of $W^{(k)}$ and $B$. Consider the weight tensor $\mathbf{W} = [W^{(0)}, W^{(1)}, ...]$ where $W^{(k+1)} = W^{(0)} + \alpha\sum_{j=1}^k \nabla_W L(D\mathbf{x}^{(j)}(T); \mathbf{y}^{(j)})$, with $\mathbf{y}^{(j)}\in\mathbb{R}^d$ where $\mathbf{x}^{(j)}$ denotes the activity of the RNN after the $j$th weight update. Then,
\begin{enumerate}
    \item The rank of the gradient at the first step is at most $\rank(\nabla_W L^{(0)}) \leq R+1$.
    \item The rank of the trial slices of the weight tensor is at most $\rank({W}^{(k)}) \leq 2R+m+d$.
    \item The tensor rank of the weight tensor is at most $\rank(\mathbf{W}) \leq (2R+m+d)^2$.
\end{enumerate}
\end{theorem}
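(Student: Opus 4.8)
The plan is to specialize Lemma~\ref{lemma:adjoint_rnn} to the linear case $\phi=\mathrm{id}$, for which $\phi'\equiv\mathbf{1}$ and the gradient integrates to $\nabla_W L = \mathbf{a}_W(0) = -\int_0^T \mathbf{a}_{\mathbf{x}}(t)\otimes \mathbf{x}(t)\,dt$. Since this is an integral of rank-one outer products, its column space lies in $\Span\{\mathbf{a}_{\mathbf{x}}(t):t\in[0,T]\}$ and its row space in $\Span\{\mathbf{x}(t):t\in[0,T]\}$. The entire argument then reduces to controlling the dimensions of these two trajectory spans via invariant subspaces of the (now linear) forward and adjoint flows, and propagating this control across the SGD updates.

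For the forward pass, $\dot{\mathbf{x}}=(W-I)\mathbf{x}+B\mathbf{u}$, I would show that $\mathrm{col}(W)+\mathrm{col}(B)$ is invariant under the velocity map (each of $W\mathbf{x}$, $-\mathbf{x}$, $B\mathbf{u}$ stays inside it), so that the stated constraint $\mathbf{x}(0)\in\mathrm{col}(W)+\mathrm{col}(B)$ forces $\mathbf{x}(t)$ to remain there for all $t$. For the adjoint, $\dot{\mathbf{a}}_{\mathbf{x}}=(W^T-I)\mathbf{a}_{\mathbf{x}}$ with $\mathbf{a}_{\mathbf{x}}(T)=\sum_j L'_j D_j\in\mathrm{row}(D)$; here the key observation is that $W^T\mathbf{a}_{\mathbf{x}}\in\mathrm{row}(W)$ while the $-\mathbf{a}_{\mathbf{x}}$ term contributes only the single terminal direction, so $\mathrm{row}(W)+\Span\{\mathbf{a}_{\mathbf{x}}(T)\}$ is invariant and contains the whole backward trajectory. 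Since $W^{(0)}$ has rank $R$, this gives $\dim\Span\{\mathbf{a}_{\mathbf{x}}(t)\}\le R+1$, hence $\rank(\nabla_W L^{(0)})\le R+1$, which proves claim~1.

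For claim~2 the main work is to show that a single fixed subspace traps the column and row spaces of every slice. Writing $C_k=\mathrm{col}(W^{(k)})$, $R_k=\mathrm{row}(W^{(k)})$, $\mathcal{B}=\mathrm{col}(B)$, $V=\mathrm{row}(D)$, the incremental update $W^{(k+1)}=W^{(k)}+\alpha\nabla_W L^{(k)}$ together with the two span bounds above yields the coupled recursion
\[
C_{k+1}\subseteq C_k+R_k+V,\qquad R_{k+1}\subseteq R_k+C_k+\mathcal{B}.
\]
The crux is to notice that this recursion closes: setting $\mathcal{S}=C_0+R_0+V+\mathcal{B}$, a one-line induction shows $C_k,R_k\subseteq\mathcal{S}$ for all $k$, since each right-hand side is a sum of subspaces already contained in $\mathcal{S}$. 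As $\dim C_0=\dim R_0=R$, $\dim V\le d$, and $\dim\mathcal{B}\le m$, we get $\dim\mathcal{S}\le 2R+m+d$, whence $\rank(W^{(k)})=\dim C_k\le 2R+m+d$.

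Claim~3 then follows by a change of basis. Fixing a basis $\{e_a\}$ of $\mathcal{S}$ of size $p=\dim\mathcal{S}\le 2R+m+d$ and using $C_k,R_k\subseteq\mathcal{S}$, each slice can be written $W^{(k)}=\sum_{a,b} m^{(k)}_{ab}\,e_a\otimes e_b$, so stacking over trials gives the explicit CP decomposition $\mathbf{W}=\sum_{a,b} e_a\otimes e_b\otimes \mathbf{g}_{ab}$ with $\mathbf{g}_{ab}=(m^{(k)}_{ab})_k\in\mathbb{R}^K$, a sum of $p^2\le(2R+m+d)^2$ rank-one tensors. I expect the main obstacle to be establishing the closed invariant subspace in claim~2 — in particular verifying that the adjoint span contributes only $\mathrm{row}(W)$ plus the lone terminal direction (the source of the delicate ``$+1$'' in claim~1) and that the column/row recursion does not leak outside $\mathcal{S}$ across arbitrarily many updates.
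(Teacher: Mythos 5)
Your proposal is correct and follows essentially the same route as the paper's proof: an invariant-subspace argument showing the adjoint trajectory lies in $\mathrm{row}(W)$ plus the terminal direction (claim~1), an induction trapping all column and row spaces of the slices in the fixed subspace spanned by the columns and rows of $W^{(0)}$ together with $\mathrm{col}(B)$ and $\mathrm{row}(D)$ (claim~2), and a CP decomposition of $\mathbf{W}$ in a fixed basis of that subspace (claim~3). Your explicit coupled recursion $C_{k+1}\subseteq C_k+R_k+\mathrm{row}(D)$, $R_{k+1}\subseteq R_k+C_k+\mathrm{col}(B)$ is just a slightly more systematic phrasing of the paper's induction on the factors $\boldsymbol{\alpha}_i^{(k)},\boldsymbol{\beta}_i^{(k)}\in V$.
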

We note that these bounds are tight, in the sense that there exists networks for which they are equalities; therefore, they cannot be improved upon without restricting the set of architectures considered (e.g. to normal weight matrices). We also point out that they are non-trivial as the current best upper bound on the max rank of a $(\mathbb{R}^{n})^{\otimes 3}$ tensor is $n^2-n-1$ \citep{buczynski2013ranks}. In particular, in the case of $W^{(0)}=0$ and $m=d=1$ considered by \cite{schuessler2020interplay}, the tensor rank is at most $4$. For arbitrary weight initializations, the matrix and tensor mathematical ranks can be high. Nevertheless, in most cases the matrix and tensor numerical ranks will fall vastly below these bounds due to Theorem \ref{thm:rank_gradient}. 

We illustrate this result with $R=3,d=2,m=2$ in Fig. \ref{fig:sv_spectrum}. We find that the true ranks fall within these bounds -- strictly below due to limited machine precision -- and that the numerical ranks are extremely low. Intuitively, decoding the firing rate of the RNN into a low-dimensional space produces weight updates that push the RNN and adjoint activity to lie in a low-dimensional subspace of the state space. This, in turn, further pushes weight updates to lie in a low-dimensional subspace of the weight space. 

\begin{figure}[ht]
    \centering
    \includegraphics[width=\textwidth]{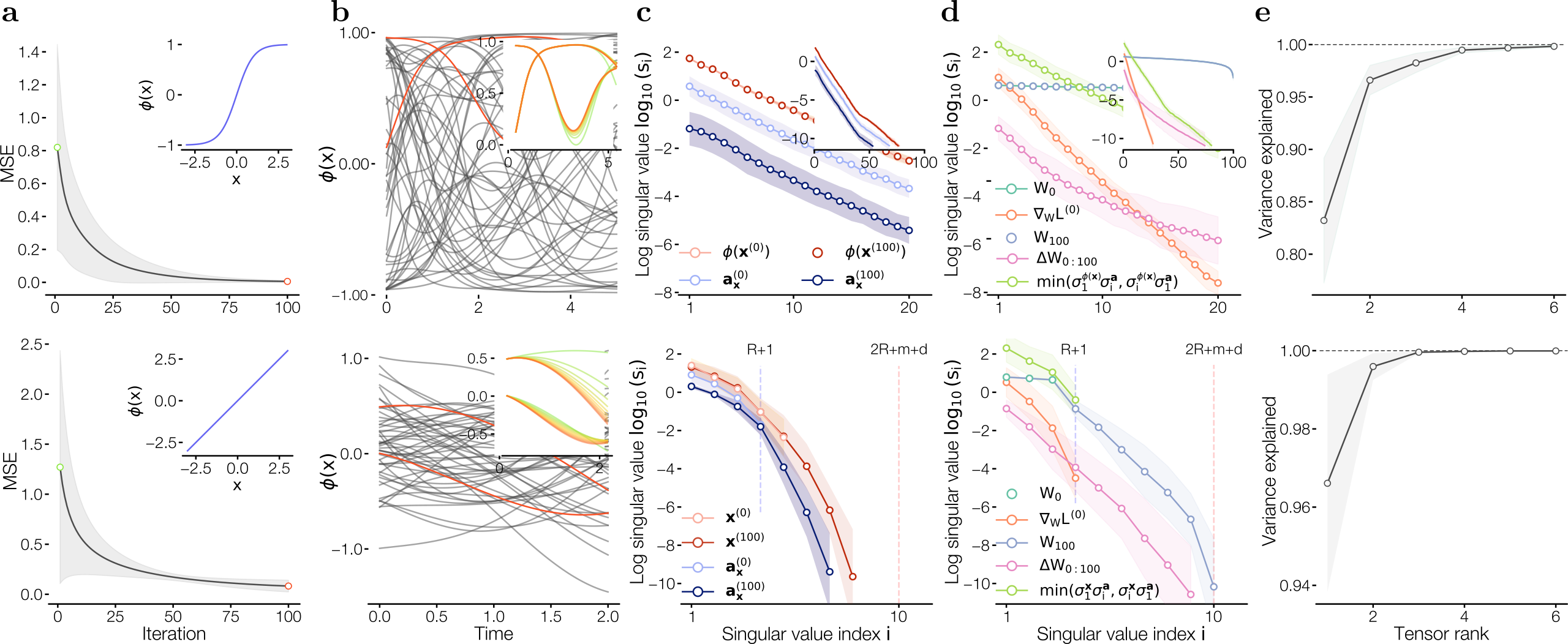}
    \caption{\textbf{The singular value spectrum of the gradient of random RNNs with random objective}. Top: Full rank RNN with tanh activation function in a chaotic regime. Bottom: Low-rank ($R=3$) linear RNN. In both cases $m=d=2$. The networks are trained to map a time-varying input (parameterized as an LDS) to a given target output. Error shade represents the standard deviation over 5 random initialization. \textbf{a}. Loss over training. \textbf{b}. Example of network activity at one iteration. Inset: activity of two neurons over learning, red is post-learning. \textbf{c}. Singular value spectrum of the activity and of the adjoint. \textbf{d}. Singular value spectrum of the weights, gradient, and bound we derive. \textbf{d}. Tensor rank of the weight tensor minus initial weights. Variance is computed per trial slice.}
    \label{fig:sv_spectrum}
    \vspace{-20pt}
\end{figure}

The proofs of Lemma \ref{lemma:adjoint_rnn} and Theorems \ref{thm:rank_gradient} and \ref{cor:rank_lds} are provided in Supplementary Material \ref{appendix:mathematical_results}. We also discuss two additional cases, namely that of a loss integrated over some period of time, and the gradient of the loss w.r.t. to other parameters of the RNN. Finally, we show that momentum-based optimization methods such as ADAM have the same property. Together, these analytical results provide insight as to why gradient dynamics bound the matrix rank of the weight updates in generic RNNs, as well as the tensor rank of learning dynamics in the linear case.

\section{Discussion}

\textbf{Summary}. In the present work, we explored the tensor rank of learning in artificial and biological neural networks. We showed that learning leads to low-tensor-rank weight updates, which can be exploited to uncover smooth changes in dynamics along with a principled choice of low-dimensional factors over which weights and activity evolve throughout the entire course of learning. Finally, we derived upper bounds on the singular values of gradient dynamics of nonlinear RNNs, and on the matrix and tensor ranks in the linear case.

\vspace{10pt}

\textbf{Modelling limitations and future work}. Inferring connectivity from neural recordings is in general an ill-posed problem \cite{de2018connectivity}: given any observed pattern of activity, it is always possible that the data could be explained entirely in terms of a external input to a set of unconnected neurons. In our application to neural data, we resolved such ambiguities by assuming that inputs were fixed on each trial of a given condition, forcing the changes in neural activity across trials of the same condition to be captured by (smoothly varying) changes in weights. We note that, in principle, our framework and code could also implement residual trial-to-trial variability in the inputs. 

In addition to learning- and condition-specific changes, neural recordings show substantial variability across consecutive trials, which are thought to reflect a combination of i) unmeasured covariates such as behavioural or intrinsic state, ii) stochasticity in the neural system itself, and iii) changes in the initial state at trial onset. Future work could incorporate behavioural covariates, allow for trial-specific initial states, and model the data using a stochastic dynamical system within each trial.

Here, we focused on gradient-based learning, motivated by recent evidence that it is able to explain many features of motor learning in neural data \citep{feulner2022small,humphreys2022bci}, as well as by more general support for an optimization-based framework to understand neural learning \cite{richards2019deep}.
The tensor rank of learning in RNNs with local synaptic rules (e.g., Hebbian) remains an open question. Towards this end, theoretical work has established links between Hebbian and gradient-based learning \citep{pehlevan2015hebbian,qin2023coordinated}, opening the possibility of an extension of our mathematical results to biologically plausible learning rules.

Our analytical results provide intuition for our observation that the rank of learning dynamics is limited by task complexity. This supports previous findings of low (matrix) rank weight changes in RNNs \cite{schuessler2020interplay} and in deep networks \cite{schotthofer2022low,martin2021implicit,arora2019implicit}. Our results on the numerical matrix rank also has interesting ties to work that uses rank compression for more efficient training in deep networks \cite{schotthofer2022low}, and for numerical solutions to systems with time-varying dynamics \cite{koch2007dynamical}.

\textbf{Broader impact}. We introduce a novel framework for understanding learning in the brain and artificial neural networks. Our mathematical results on the rank of RNN gradients have broad relevance to the machine learning community, while our results on motor neural data could drive future applications to brain computer interfaces. Overall, our work makes novel contributions towards understanding the emergence of computations through learning in neural systems.

\clearpage

{\small
\section*{Acknowledgments}
We thank Francesca Mastrogiuseppe, Friedrich Schuessler, and Sina Tootoonian for feedback on the manuscript, and  Matthias Hennig, Matt Nolan, and Srdjan Ostojic for helpful discussions. We are particularly grateful to Matthew Perich for sharing his data. This work was supported by the Agence Nationale de la Recherche (ANR-20-CE37-0004; ANR-17-EURE-0017).

\bibliographystyle{unsrt}
\bibliography{bibliography}
}

\clearpage

\appendix

{\Large \bf Supplementary Material}

\renewcommand{\figurename}{Supplementary Figure}
\setcounter{figure}{0}


\section{Low tensor rank recurrent neural networks}\label{sup_section:ltrRNN}
\subsection{Architecture}
\textbf{Low tensor rank weights}. In order to probe for the tensor rank of learning in neural data, we introduce an RNN architecture that captures the evolution of neural activity over slow timescales. We first recall the description of the architecture of the main text, $\mathbf{W}\in \mathbb{R}^{N \times N \times K}$,
\[
\mathbf{W} = \sum_{r=1}^R \mathbf{a}_r\otimes \mathbf{b}_r \otimes \mathbf{c}_r.
\]
In particular, at trial $k$, the dynamics of the RNN can be described as,
\[
\tau\mathbf{\dot x}= W^{(k)}\phi(\mathbf{x})-\mathbf{x} + B\mathbf{u}^{(k)}(t) = \sum_{r=1}^R c_r^{(k)}(\mathbf{a}_r \otimes \mathbf{b}_r)\phi(\mathbf{x}) -\mathbf{x} + B\mathbf{u}^{(k)}(t)
\]
for $B\in \mathbb{R}^{N \times N_{\text{input}}},\mathbf{u}^{(k)}(t)\in \mathbb{R}^{N_\text{input}}$, so that the RNN is a low rank RNN \citep{mastrogiuseppe2018linking}. When training, we initialize $\mathbf{a}_r\sim \mathcal{N}(\mathbf{0},I)$, $\mathbf{b}_r=\mathbf{a}_r$. Furthermore, the weights are parameterized such that $||\mathbf{a}_r||=||\mathbf{b}_r||=1$, so that the magnitude of a component is captured by $\mathbf{c}_r$. 

\textbf{Smoothness constraint over trials}. We further constrain the initial covariance in trial of the trial factors by parameterizing them as $\mathbf{c}_r = (L+\sigma I)\mathbf{\bar c}_r$ where $LL^T=A$ is the Cholesky decomposition of the smooth covariance matrix $A$, and $\mathbf{\bar c}_r$ is initialized as $\mathbf{\bar c}_r\sim \mathcal{N}(\mathbf{0},I)$. In particular, we use a rational quadratic kernel $s^2(1+(2l)^{-1}(k_i-k_j)^2)^{-1}$, where $k_i$ is the $i$th trial index. This is equivalent to performing a $3$-mode matrix-tensor product on the weight tensor itself $(L+\sigma I) \times_3 \mathbf{W}$ so that its entries over trials are linear combinations of smooth functions up to observation noise. 

This parameterization is similar to that of Gaussian process regression, except no probabilistic objective is set (kernel regression). In particular, given that $(L+\sigma I)$ is invertible, any possible $\mathbf{c}_r$ can in theory be obtained upon optimization of the $\mathbf{\bar c}_r$. By additionally setting a regularization on $\mathbf{\bar c}_r$, we penalize the smoothness of $\mathbf{c}_r$ as non-smooth solutions have diverging $\mathbf{\bar c}_r$. In this way, we bias the optimization process towards smoother $\mathbf{c}_r$'s. As illustrated in Fig. \ref{fig:perich}b, the cross-validated loss remains similar to the non-smooth, full-rank, case. 

A key advantage of having smooth trial factors is that missing trials can be easily accounted for. Indeed, in most large-scale neural datasets, such as the one explored in the present work, potentially many trials may have been discarded, for example due to behavioural performance being outside the range set by the experimentalist. The assumption being made here is that the across-trial covariance is preserved when such trials occur. That is, we assume that a failure of the animal to perform a given trial does not imply that a trial wasn't informative, or that learning did not occur.

\textbf{Condition-wise inputs}. We parameterize the condition-wise inputs to the network with neural ordinary differential equations \citep{chen2018neural}, i.e. as a dynamical system whose right hand side is parameterized by a deep neural network (DNN)
\[
\mathbf{\dot v}^{(i)} = DNN(\mathbf{v}^{(i)}) \quad \mathbf{u}^{(i)}(t)=\phi(D\mathbf{v}^{(i)}(t)),
\]
where $\mathbf{v}^{(i)}(t)\in\mathbb{R}^{N_{\text{NODE}}}$ and $D\in\mathbb{R}^{N_{\text{input}}\times N_{\text{NODE}}}$. Throughout this work, we used a fully-connected 3-layer DNN with layers of size $150$ and ReLU nonlinearities.  This provides inputs whose dynamics are considerably less constrained than those generated by low-rank RNNs. Thus, we do not make any assumption on the activity of upstream brain regions which drive the activity of the brain region being recorded. While here we chose to model the inputs using autonomous neural ODEs, one might imagine feeding the neural ODE with behavioural or task covariates to relax the condition specificity assumption, or fitting residual inputs to capture trial-by-trial variability arising from unmeasured variables \citep{khan2018distinct}. This could account for some of the variability that can neither be explained by the task condition, nor by changes in the dynamics due to learning.

\textbf{Loss}. In sections \ref{section:perich} and \ref{section:ttRNN} we focus on optimizing for the mean squared error (MSE)
\begin{align}
    L(\mathbf{a}, \mathbf{b}, \mathbf{c}, B, M) = \sum_k^K\sum_q^T\left|\left|M\phi(\mathbf{x}^{(k)}(t_q)) - \mathbf{y}^{(k)}(t_q) \right|\right|^2 ,
\end{align}
where $\mathbf{y}$ are firing rate estimates data and $M\in \mathbb{R}^{N\times N_{\text{data}}}$. In section \ref{sup_section:perich} we present supplementary results on optimizing the Poisson log-likelihood with respect to spike data,
\begin{align}
    L(\mathbf{a}, \mathbf{b}, \mathbf{c}, B, M) = -\sum_k^K\sum_q^T \log(\text{Poisson}(\mathbf{y}^{(k)}(t_q) | M\phi(\mathbf{x}^{(k)}(t_q))))
\end{align}
where $\mathbf{y}$ are binned spike data.

\textbf{Pseudocode.} The following pseudocode summarizes the steps of fitting an ltrRNN to neural data. For the sake of clarity, we present the simplest case: an autonomous ltrRNN with fixed initial state. The neural ODE-driven case can be achieved by coupling $f$ with a deep neural net. Additionally, to allow for the \textit{initial state} (w.r.t. the data) of the RNN to change over conditions and trials, the evaluation of the dynamical system can be done over $\{-T_0\Delta t, ..., 0, ..., T\Delta t\}$,where $\mathbf{x}_0$ is now the state at $-T_0\Delta t$, and the fit to data is still done on the non-negative time states.

\algdef{SE}[SUBALG]{Indent}{EndIndent}{}{\algorithmicend\ }%
\algtext*{Indent}
\algtext*{EndIndent}

\makeatletter
\xpatchcmd{\algorithmic}
  {\ALG@tlm\z@}{\leftmargin\z@\ALG@tlm\z@}
  {}{}
\makeatother

\begin{algorithm}
\caption{Low tensor rank recurrent neural network fit to data}\label{alg:ltrRNN}

\begin{algorithmic}

\State \textbf{inputs:}
\Indent 
\State $\mathbf{Y}\in\mathbb{R}^{N_\text{data} \times T \times K}$ \Comment{Neural data tensor of shape neuron $\times$ time $\times$ trial}
\State $\mathbf{t}=\{0, \Delta t, ..., T\Delta t\}$ \Comment{Time points of evaluation of the dynamical system}
\EndIndent

\State \textbf{initializations:}
\Indent
\State randomly initialize $\mathbf{a}_r,\mathbf{b}_r\in \mathbb{R}^N$ for $r = 1, \dots, R$ 
\State randomly initialize $\mathbf{\bar c}_r \in \mathbb{R}^N$  for $r = 1, \dots, R$ 
\State randomly initialize $\mathbf{q}\in\mathbb{R}^R$, $M\in\mathbb{R}^{N_\text{data} \times N}$
\EndIndent

\State \textbf{definitions:}
\Indent
\State $f_W(\mathbf{x})=W\phi(\mathbf{x}) - \mathbf{x}$ \Comment{RNN dynamics for a given weight matrix}

\State $A\in\mathbb{R}^{K \times K}$ such that $A_{ij} \gets \kappa(i,j)$ \Comment{Trial covariance matrix defined by a smooth kernel $\kappa$}

\State $L \gets \text{Cholesky}(A+\sigma^2 I)$  \Comment{Cholesky decomposition}
\EndIndent

\State

\While{the loss $l$ hasn't converged}
    \State $l \gets 0$
    \State $\mathbf{c}_r \gets L \mathbf{\bar c}$
    \State $\mathbf{W} \gets \sum_{r=1}^{R} \mathbf{a}_r \otimes \mathbf{b}_r \otimes \mathbf{c}_r$
    \State $\mathbf{x}_0 \gets \sum_{r=1}^R q_r\mathbf{a}_r$ \Comment{The initial state}
    \For{$k = 1, ..., K$} \Comment{For all trials (parallelizable)}
        \State $X^{(k)} \gets \text{ODESolve}(f_{W^{(k)}}, \mathbf{x}_0, \mathbf{t})$ \Comment{Matrix of activity of the RNN during trial $k$} 
        \State $l \gets l + \underbrace{||M \phi(X^{(k)})-Y^{(k)}||_2^2}_{\text{Fit to data}} + \underbrace{\alpha||X^{(k)}||_2^2}_{\text{Regularization}}$ 
    \EndFor
    \State{SGDUpdate$(\mathbf{a}_r, \mathbf{b}_r, \mathbf{\bar c}_r, M, \mathbf{q})$} \Comment{E.g. $\mathbf{a}_r \gets \mathbf{a}_r - \eta\frac{dl}{\mathbf{d\mathbf{X}}}\frac{\mathbf{dX}}{\mathbf{dW}}\frac{\mathbf{dW}}{\mathbf{da}_r}$}
\EndWhile

\end{algorithmic}
\end{algorithm}

\textbf{Code availability}. The ltrRNN implementation can be found at \url{https://github.com/arthur-pe/LtrRNN}.

\begin{table}[ht]
    \centering
    \caption{\textbf{Hyperparameters of the ltrRNN models}. Bold indicates values specific to section \ref{section:perich} and \ref{section:ttRNN}, $^*$ indicates cross-validated hyperparameters, other hyperparameters were tuned by hand.}
    \begin{tabular}{|r|l|l|}
    \hline
          & \textbf{Neural data (S\ref{section:perich})} & \textbf{Simulated data (S\ref{section:ttRNN}})  \\\hline
         \multicolumn{1}{|l|}{\textbf{LtrRNN}}        &               &               \\
         $R$                    & 5$^*$     & 5$^*$               \\
        $n$                     & 200$^*$   & 200$^*$            \\
         $\phi$                 & tanh      & tanh            \\\hline
         \multicolumn{1}{|l|}{\textbf{Smoothness}}    &               &               \\
         $l$                    & \textbf{50}   &\textbf{15}    \\
         $s$                    & 0.1           & 0.1           \\
         $\sigma$               & 0.1           & 0.1           \\\hline
         \multicolumn{1}{|l|}{\textbf{Neural ODE}}    &               &               \\
         Layers                 & $\mathbf{150\times 150 \times 150}$       & \textbf{N/A} \\
         $\phi$                 & \textbf{ReLU}          &  \textbf{N/A}          \\\hline
         \multicolumn{1}{|l|}{\textbf{Regularization}} && \\
         $\alpha$ & \textbf{0.01}              & \textbf{0}\\\hline
         \multicolumn{1}{|l|}{\textbf{Cross-validation}} & & \\
         Train blocks & $1 \times 10 \times \mathbf{20}$ & $1 \times 10 \times \mathbf{10}$ \\
         Test blocks & $1 \times 5 \times \mathbf{10}$                   & $1 \times 5 \times \mathbf{5}$ \\
          &               & neuron $\times$ time $\times$ trial \\\hline
    \end{tabular}
    \label{tab:my_label}
\end{table}

\begin{table}[ht]
    \centering
        \caption{\textbf{Approximate training time of ltrRNNs}. Here on the neural data of section \ref{section:perich}. The variables which impact the most training time are the trial and neuron dimensions of the ltrRNN (not the rank or data time steps), as well as the neural ODE architecture. Hardware : desktop with an RTX 3090 Nvidia GPU and i7-12700K Intel CPU. $^*$ indicates the one used in section \ref{section:perich}.}
    \begin{tabular}{|cc|cc|}
    \hline
     && \multicolumn{2}{c|}{\textbf{Neurons}} \\ 
     && \textbf{200} & \textbf{400} \\ \hline &&&\\[-5pt]
     \multirow{2}{*}{\rotatebox[origin=c]{90}{\textbf{Trials}}}
     & \textbf{370} & 12min                      & 22min       \\[3pt] 
     & \textbf{740} & 16min$^*$                  & 40min       \\[3pt] \hline
  \end{tabular}
    \label{tab:runtime}
\end{table}

\subsection{The dynamics of ltrRNNs}

\textbf{Rich changes of dynamics through oblique columns}. Unlike for a matrix rank decomposition, a tensor rank decomposition can be unique even for non-orthogonal factors. A sufficient condition for uniqueness is that $r_{\mathbf{a}}+r_{\mathbf{b}}+r_{\mathbf{c}}\leq R-2$ where, without loss of generality, $r_\textbf{a}$ denotes the maximum number of linearly dependent columns of $\mathbf{a}$ \citep{rhodes2010concise}. In other words, fitting the changes of dynamics over trials as opposed to a single low rank RNN shared over all trials gives additional information regarding the columns and rows. In the case where the $\mathbf{a}_j$'s are not orthogonal, non-trivial qualitative changes in the vector fields can occur. Since for any given trial $k$, an ltrRNN is simply a low rank RNN \citep{mastrogiuseppe2018linking},
\begin{align}
    \mathbf{\dot x}^{(k)} = \sum_j \mathbf{a}_j (c^{(k)}_j\mathbf{b}_j\cdot \phi(\mathbf{x}^{(k)})) - \mathbf{x}^{(k)} + B\mathbf{u}^{(k)}(t)
\end{align}
so that the dynamics of $\mathbf{x}^{(k)}$ are constrained to $\Span\{\mathbf{a}_j\}\cup\{B_j\}$. Unlike low rank RNNs, ltrRNN are not necessarily invariant under changes of bases of $\mathbf{a}_j$'s. Nevertheless, we can introduce an orthonormal basis $\{\mathbf{\tilde a}_j\}$ so that,
\begin{align}
    W^{(k)}\phi(\mathbf{x}^{(k)}) = 
    \sum_i^R \mathbf{\tilde a}_i \sum_j (\mathbf{\tilde a}_i \cdot \mathbf{a}_j)(\mathbf{b}_j \cdot \phi(\mathbf{x}))c^{(k)}_j
\end{align}
In particular, notice that the dynamics along all $\mathbf{\tilde a}_i$ could potentially be affected by varying $c^{(k)}_j$.

Conversely, the $\mathbf{a}_i$'s and $\mathbf{b}_i$'s being respectively orthonormal --- e.g. as in a singular value decomposition --- is not a sufficient condition for uniqueness of the tensor rank decomposition \citep{kolda2001orthogonal}. Nevertheless, constraining the $\mathbf{a}_i$ to be orthogonal and ensuring that the Kruskal constraint is satisfied, the vector fields are then orthogonal. In that case, varying $c_i^{(k)}$ for some $i$ corresponds to rescaling the vector field along $\mathbf{a}_i$. Nevertheless, as is illustrated below with a single component, the leak term allows the system to display typical properties of linear and nonlinear dynamical systems. 

\textbf{Bifurcation in a tensor rank-one RNN}. A classical example of bifurcation in a two-neuron system is that of the pitchfork supercritical bifurcation. Here, we show that it is essentially a tensor rank one RNN. We also illustrate how the corresponding linear RNN bifurcates. Let $\mathbf{a}=\mathbf{b}=[-1/\sqrt{2},1/\sqrt{2}]^T$ so that,
\begin{align}
    W^{(k)}=\frac{c}{2}\begin{bmatrix}
        1 & -1
        \\-1 & 1
    \end{bmatrix} 
    \xRightarrow{\mathbf{\dot x}=0} 
    \begin{bmatrix}
        \mathbf{x}_1
        \\\mathbf{x}_2
    \end{bmatrix}
    &=
    \frac{c}{2}\begin{bmatrix}
        \phi(\mathbf{x}_1) - \phi(\mathbf{x}_2)
        \\-\phi(\mathbf{x}_1) + \phi(\mathbf{x}_2)
    \end{bmatrix}
\end{align}
That is $\mathbf{x}_1=-\mathbf{x}_2$. We consider two cases, $\phi = \tanh$ and $\phi = \text{id}$, both odd functions. Introducing the two in the previous equation, $-\phi(\mathbf{x}_2)=\phi(\mathbf{x}_1)$. So that $\mathbf{x}_i=c\phi(\mathbf{x}_i)$. Now considering each activation function separately, 
\begin{itemize}
    \item tanh: i) $c>1$ has two solutions. ii) $c\leq 1$ one solution (the origin). At $c=1$ the origin is a non-hyperbolic fixed point.
    \item id: i) $c=1$ for any $\mathbf{x}_1$. The non-zero eigenvalue of the Jacobian of the system is negative, therefore it is a line attractor. ii) $\mathbf{x}_1=0$ for any $c$. Then the Jacobian of the system at the origin has both positive and negative eigenvalues for $c>1$ and only negative for $c\leq 1$.
\end{itemize}

\begin{figure}[ht]
    \centering
    \includegraphics[width=0.66\textwidth]{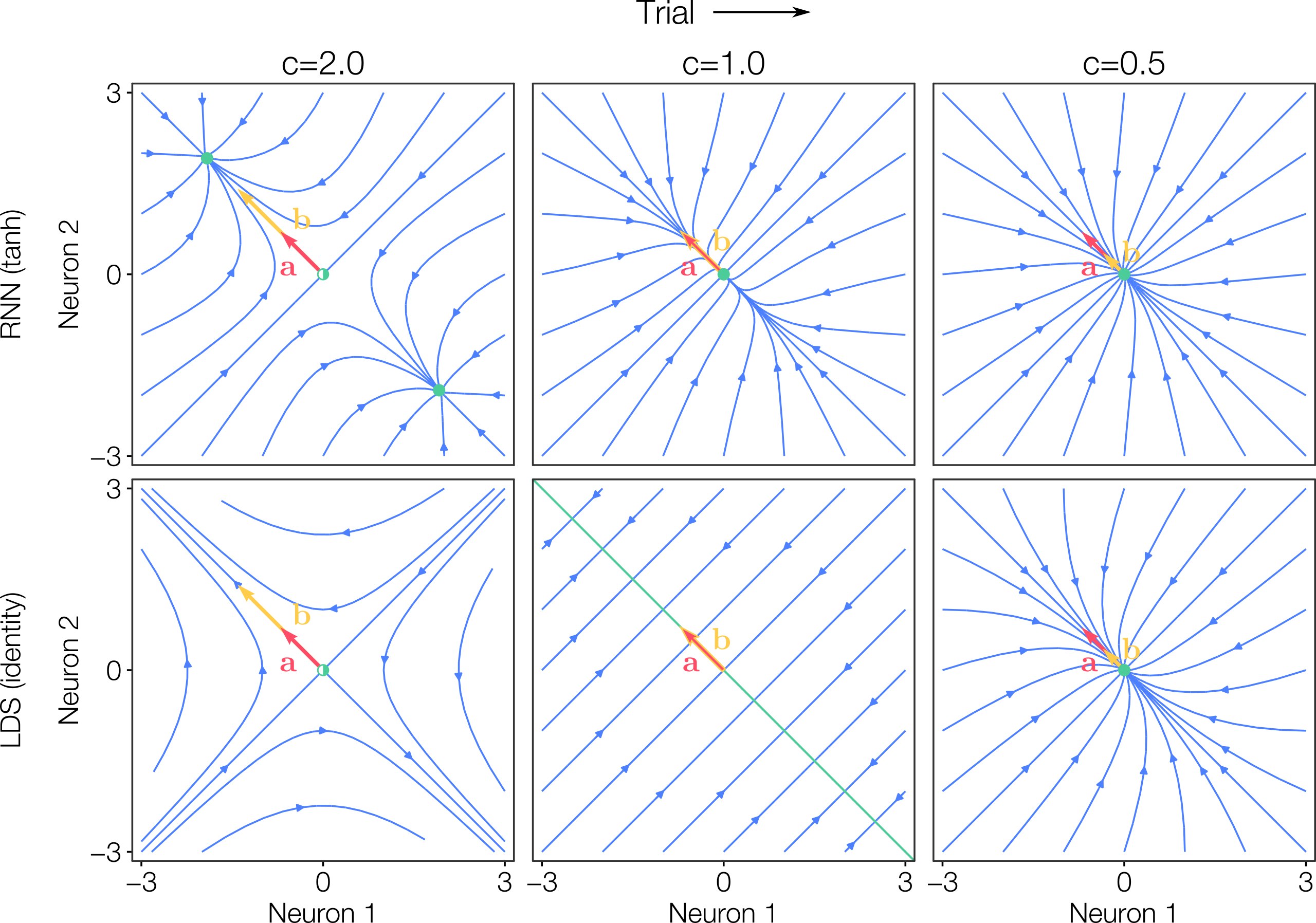}
    \caption{\textbf{Bifurcation in a tensor rank one RNN}.}
    \label{sup_fig:bifurcation}
\end{figure}

\textbf{More general changes in vector fields}. At the cost of increasing the rank, an ltrRNN can possibly transition between two arbitrary vector fields of ranks $R_1$ and $R_2$. For example, let $\mathbf{c}_{j}=[1,0.9,...,0]$ for $j\in \{1,...,R_1\}$ and $\mathbf{c}_{j}=[0,...,0.9,1]$ for $j\in \{R_1+1,...,R_1+R_2\}$. There might however be multiple bifurcations between the first and last trial. More generally, given that any tensor has a (possibly high rank) tensor decomposition, any weight tensor can in theory be captured by an ltrRNN. This further illustrates the relevance of the result we found that ltrRNN of very low ranks fit data as well as full tensor RNNs. 

\subsection{Relationship between rank decomposition and eigendecomposition}

In this section, we investigate the relationship between an arbitrary low rank decomposition of a matrix and the eigendecomposition. We assume that, on a given trial $k$, $W^{(k)}$ has the rank decomposition \begin{equation} \label{lowrankdecomp}
W^{(k)} = \mathbf{a} \mathrm{diag}(\mathbf{c}^{(k)}) \mathbf{b}^T = \sum_{i=1}^{R'} c^{(k)}_i \mathbf{a}_i \mathbf{b}_i^T .\end{equation}
where $\mathbf{a}\in \mathbb{C}^{N \times R'}$, $\mathbf{b} \in \mathbb{C}^{R' \times N}$, $\mathbf{c}^{(k)} \in \mathbb{R}^{R'}$, and $\mathbf{a}_i, \mathbf{b}_i$ are the rows/colums of $\mathbf{a}, \mathbf{b}$ respectively.
One such low rank decomposition is the eigendecomposition\footnote{Assuming $W^{(k)}$ is diagonalisable. A similar argument using the Jordan normal form holds for defective matrices. }\footnote{Here, we write the left eigenvectors (rows of $V^{-1}$, transposed into column vectors) as $\tilde{\mathbf{v}}_r$. We can assume without loss of generality that the right eigenvectors $\mathbf{v}_i$ are normalised to unit length, in which case the orthonormality of left and right eigenvectors gives $\tilde{\mathbf{v}}_i \cdot \mathbf{v}_j = \delta_{ij} =  \lVert \tilde{\mathbf{v}}_i \rVert \lVert \mathbf{v}_j \rVert \cos(\theta^{\tilde{\mathbf{v}},\mathbf{v}}_{ij})  \implies \lVert \tilde{\mathbf{v}}_i \rVert = 1/\cos(\theta^{\tilde{\mathbf{v}},\mathbf{v}}_{ii})$, where $\theta^{\tilde{\mathbf{v}},\mathbf{v}}_{ii}$ is the angle between the $i$th left and right eigenvector. } $$W^{(k)} = V \Lambda V^{-1} = V \mathrm{diag}({\boldsymbol\lambda}) V^{-1} =  \sum_{i=1}^{R} \lambda_i \mathbf{v}_i \tilde{\mathbf{v}}_i^T.$$ By the rank-nullity theorem, $R$ is the rank of $W^{(k)}$, so that $R'\ge R$, with $R=R'$ when Equation (\ref{lowrankdecomp}) is a minimal rank decomposition. Equating the two decompositions gives a general expression for the eigenvalues: $$\Lambda = V^{-1} \mathbf{a} \mathrm{diag}(\mathbf{c}) \mathbf{b}^T V \implies \lambda_i = (V^{-1} \mathbf{a} \mathrm{diag}(\mathbf{c}) \mathbf{b}^T V)_{ii} =  \sum_j c_j (\tilde{\mathbf{v}}_i \cdot \mathbf{a}_j)(\mathbf{v}_i \cdot \mathbf{b}_j).$$ 
If $\lambda_i, \mathbf{a}, \mathbf{b}, \mathbf{c}^{(k)}\in\mathbb{R}$, this gives rise to the result stated in the main text:
\begin{align}
\lambda_i & = \sum_j c_j (\lVert \tilde{\mathbf{v}}_i \rVert \lVert  \mathbf{a}_j \rVert \cos \theta^{\tilde{\mathbf{v}},\mathbf{a}}_{ij}) (\lVert \mathbf{v}_i \lVert \rVert \mathbf{b}_j \rVert \cos \theta^{\mathbf{v},\mathbf{b}}_{ij}) \\
& = \sum_j c_j (\frac{\lVert  \mathbf{a}_j \rVert}{\cos(\theta^{\tilde{\mathbf{v}},\mathbf{v}}_{ii})} \cos \theta^{\tilde{\mathbf{v}},\mathbf{a}}_{ij}  )(\lVert \mathbf{b}_j \rVert \cos \theta^{\mathbf{v},\mathbf{b}}_{ij}) \\
& = \sum_j c_j  \frac{ \lVert  \mathbf{a}_j \rVert \lVert \mathbf{b}_j \rVert}{\cos(\theta^{\tilde{\mathbf{v}},\mathbf{v}}_{ii})}  \cos \theta^{\tilde{\mathbf{v}},\mathbf{a}}_{ij}  \cos \theta^{\mathbf{v},\mathbf{b}}_{ij} \\
& = \frac{1}{2}\sum_j c_j  \frac{ \lVert  \mathbf{a}_j \rVert \lVert \mathbf{b}_j \rVert}{\cos(\theta^{\tilde{\mathbf{v}},\mathbf{v}}_{ii})} ( \cos (\theta^{\tilde{\mathbf{v}},\mathbf{a}}_{ij}  + \theta^{\mathbf{v},\mathbf{b}}_{ij}) + \cos (\theta^{\tilde{\mathbf{v}},\mathbf{a}}_{ij}  - \theta^{\mathbf{v},\mathbf{b}}_{ij})). 
\end{align}

Note that the above derivation makes no assumptions about the form of the low rank decomposition, other than that it is real-valued. Low rank decompositions commonly set $\lVert  \mathbf{a}_i \rVert = \lVert  \mathbf{b}_i \rVert = 1$, and often enforce orthogonality on the $\mathbf{a}_i$ and/or $\mathbf{b}_i$, thereby introducing additional constraints on the relationship between the $c$'s and $\lambda$'s. Normal matrices have $\cos(\theta^{\tilde{\mathbf{v}},\mathbf{v}}_{ij}) = \delta_{ij}$ and $\tilde{\mathbf{v}}_i = \mathbf{v}_i$, in which case further simplifications can be made. 

\section{Motor learning}\label{sup_section:perich}

\textbf{Pre-processing}. Motor ($n=72$ for Fig. \ref{fig:perich}, $n=70$ for Sup. Fig. \ref{sup_fig:perich_additional_session}) and premotor ($n=231$ for Fig. \ref{fig:perich}, $n=137$ for Sup. Fig. \ref{sup_fig:perich_additional_session}) cortical neurons were used. The data were Gaussian filtered with a standard deviation of $40$ ms ($4$ time bins). It was then centered by its baseline activity through subtracting neuron-wise the mean activity from around target onset to go-cue, and rescaled by dividing neuron-wise by the standard deviation of the execution period. Namely, the activity of a neuron $\mathbf{\bar y}_i(t)$ was given by 
\begin{align}
    \mathbf{\bar y}_i(t) = \frac{\mathbf{y}_i(t)-\langle \mathbf{y}_i(t) \rangle_{t\leq 100}}{\langle (\mathbf{y}_i(t) - \langle \mathbf{y}_i(t) \rangle_{t>100})^2\rangle_{t>100}}
\end{align}
where $t=0$ is the go cue. Example of activity upon this pre-processing is given in Supp. Fig. \ref{sup_fig:pre_processing}.

\textbf{Modeling assumptions}. We assumed that motor and premotor cortex was driven into an initial state by inputs from upstream regions during the preparatory period, after which the input shuts off so that the resulting activity during the reach evolves autonomously via the recurrent dynamics dynamics from that initial state. We therefore set $\mathbf{u}(t)=0$ for $t>100$ where $t=0$ is the time of the go cue. Where the $100$ ms account for a sensory delay. 

\textbf{Cross-validation procedure}. We cross-validated the optimal rank and number of neurons of the ltrRNN. Low matrix or tensor rank models can be cross-validated by holding out specific entries of the matrix or tensor for training, and then used for testing. However, neural data has temporal correlation, such that the entry of the time-by-trial-by-neuron data tensor $T_{ijk}$ is strongly correlated with $T_{i-1,jk}$ and $T_{i+1,jk}$. For example, assuming the data are continuous, a simple average of these entries will give an optimal estimate $T_{ijk}$ in the limit of small time bins. Thus, the test set can be trivially inferred from the train set. We validated this intuition by performing the same cross-validation as section \ref{section:perich} but with $1\times 1\times 1$ blocks, and found the test loss was similar as using the train loss over the whole dataset (Sup. Fig. \ref{sup_fig:cv_comparison}).

To counter this effect, sets of contiguous entries $[T_{ijk}, ..., T_{i+n,jk}]$ can be held out of training, and the interior of these blocks $[T_{i+q,jk}, ..., T_{i+n-q,jk}]$ used for testing \citep{pellegrino2023disentangling}. Here, given that we are interested in uncovering smooth changes in neural activity over slow timescales, we hold out $n$-by-$m$ matrices $[[T_{ijk}, ..., T_{i+n,jk}], ..., [T_{i,j+m,k}, ..., T_{i+n,j+m,k}]]$ (Fig. \ref{fig:perich}b. inset).

As mentioned in Supplementary Material \ref{sup_section:ltrRNN}, our method allows inferring the dynamics of held-out trials. We found that using the cross-validation procedure from \cite{pellegrino2023disentangling} infers similar ranks as holding out entire trials (Sup. Fig. \ref{sup_fig:cv_comparison}). We nevertheless applied this procedure for the sake of being able to compare different classes of models.

\textbf{Poisson log likelihood}. Another method for fitting firing-rate models to spike data is to use the negative Poisson log likelihood loss \citep{jude2022robust,valente2022extracting}. We fitted a ltrRNN ($R=5$, $n=200$ as in the MSE loss case) with softplus activation using negative Poisson log likelihood loss. We found similar but overall noisier results (Supp. Fig. \ref{sup_fig:perich_poisson}). This may be due to the presence of high firing rate neurons which are normalized by the preprocessing procedure but not likelihood fitting.

\begin{figure}[ht]
    \centering
    \includegraphics[width=\textwidth]{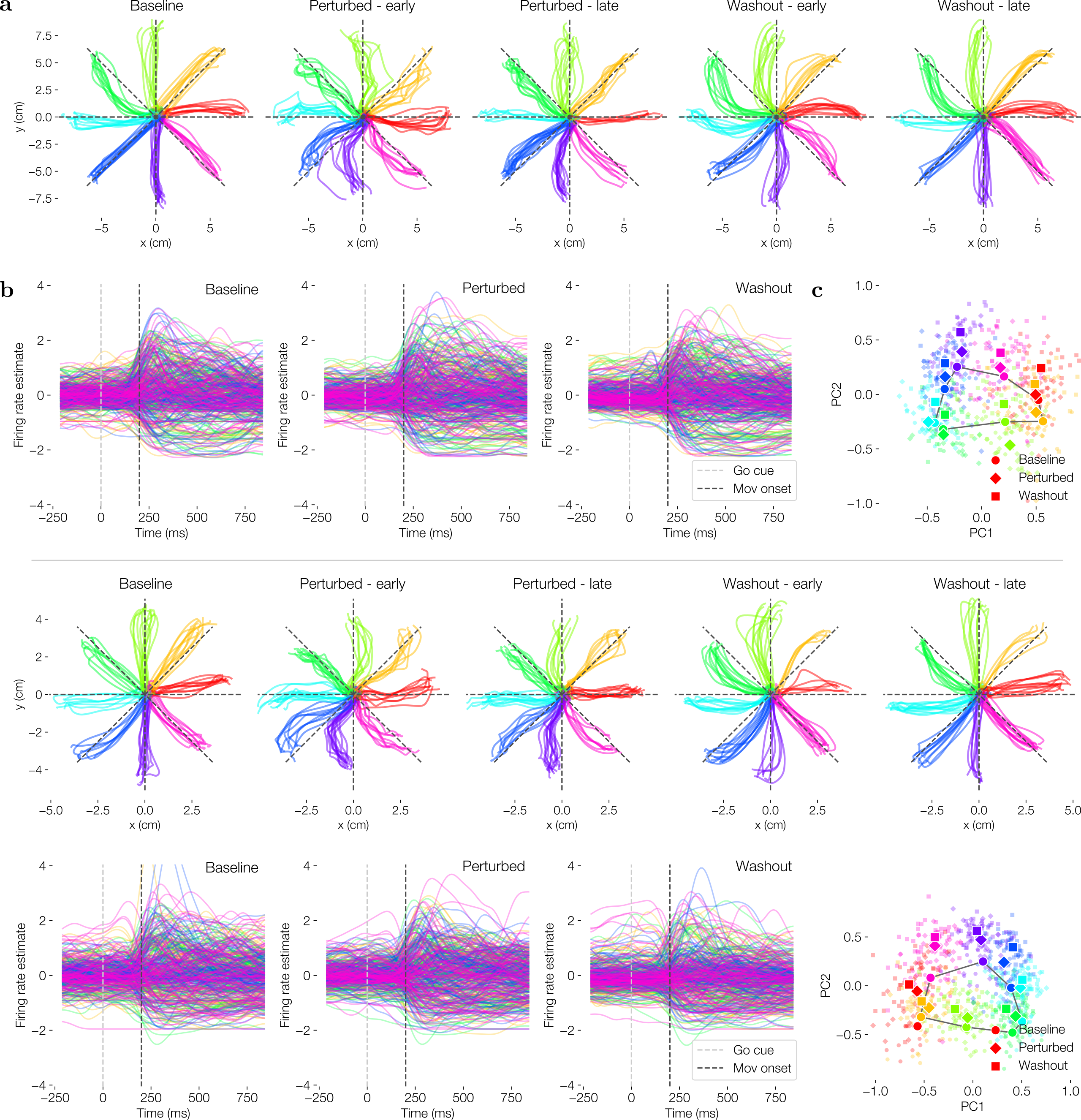}
    \caption{\textbf{Persistent effects of motor learning}. Top: session used in the main text (Fig. \ref{fig:perich}). Bottom : session used in supplementary material (Sup. Fig. \ref{sup_fig:perich_additional_session}). \textbf{a}. Hand movement during the first and last $80$ trials of perturbation learning and washout. The hand trajectories of some reach directions do not revert back post-washout (e.g. light green for top; dark green for bottom) \textbf{b}. Single neuron activity averaged within each condition. \textbf{c}. State at go cue $+100$ms. Larger full color marker are median within a condition. For some reach directions, the washout tends to be more similar to the perturbed state than the baseline.}
    \label{sup_fig:pre_processing}
\end{figure}

\begin{figure}[ht]
    \centering
    \includegraphics[width=0.4\textwidth]{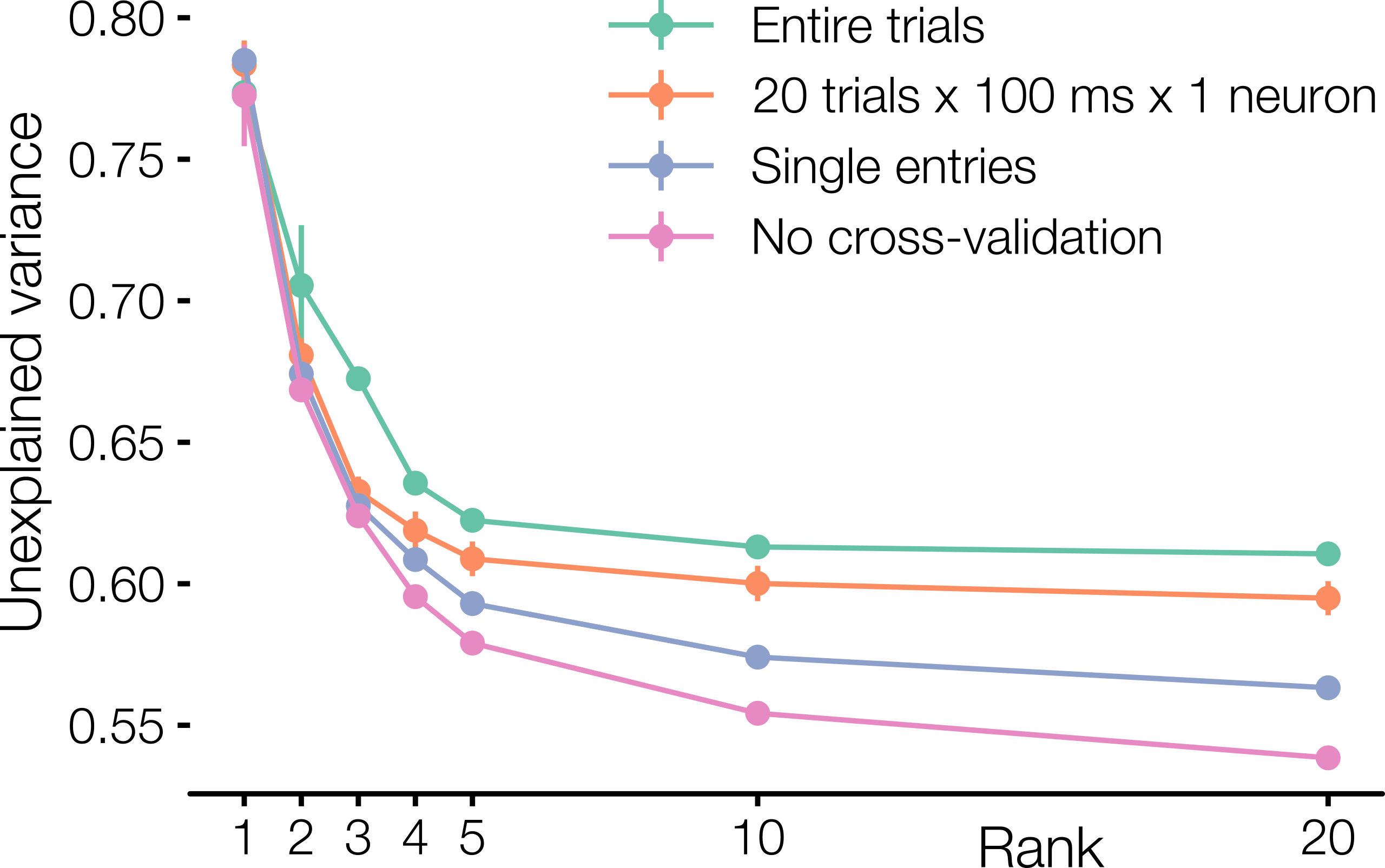}
    \caption{\textbf{Comparison of cross-validation procedures}. Applied to the neural data of the session used in the main text.}
    \label{sup_fig:cv_comparison}
\end{figure}

\begin{figure}[ht]
    \centering
    \includegraphics[width=0.9\textwidth]{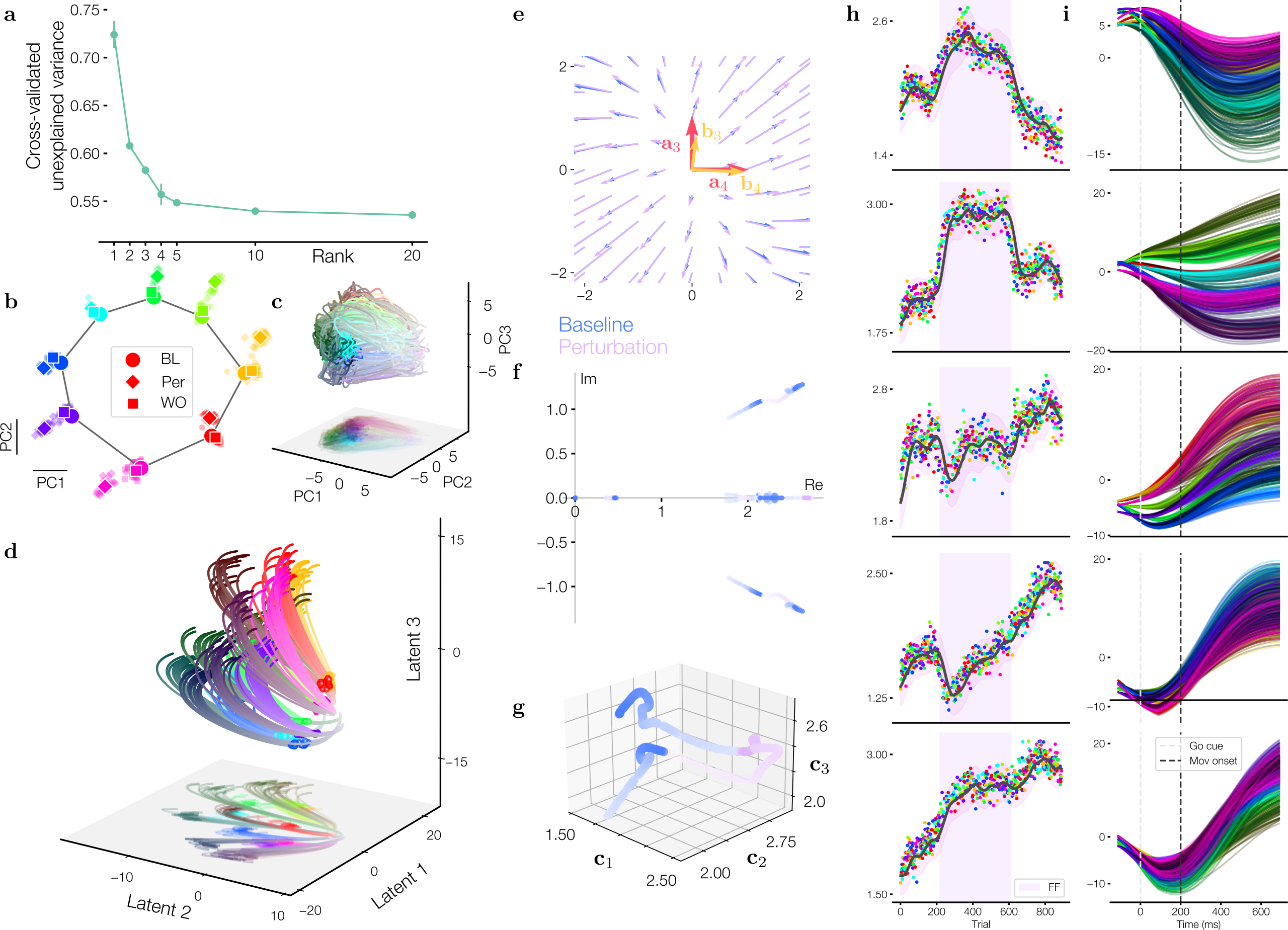}
    \caption{\textbf{LtrRNN applied to an additional recording session}. \textbf{a}. Cross-validated loss with held out blocks of size $100$ms by $20$ trials. \textbf{b}. PCA on preprocessed data. \textbf{c}. State of the ltrRNN at go cue $+100$ms. \textbf{f}. Projection on first three $\mathbf{a}_j$. \textbf{e}. Projection of the vector field along $\mathbf{a}_j$ (see main text). \textbf{f}. Eigenspectrum of $W^{(k)}$ over trials. \textbf{g}. First three $\mathbf{c}_j$. \textbf{h}. Trial factors $\mathbf{c}_j$. \textbf{i}. Projection of $\mathbf{x}^{(k)}(t)$ on the corresponding $\mathbf{a}_j$.}
    \label{sup_fig:perich_additional_session}
\end{figure}

\begin{figure}[ht]
    \centering
    \includegraphics[width=0.9\textwidth]{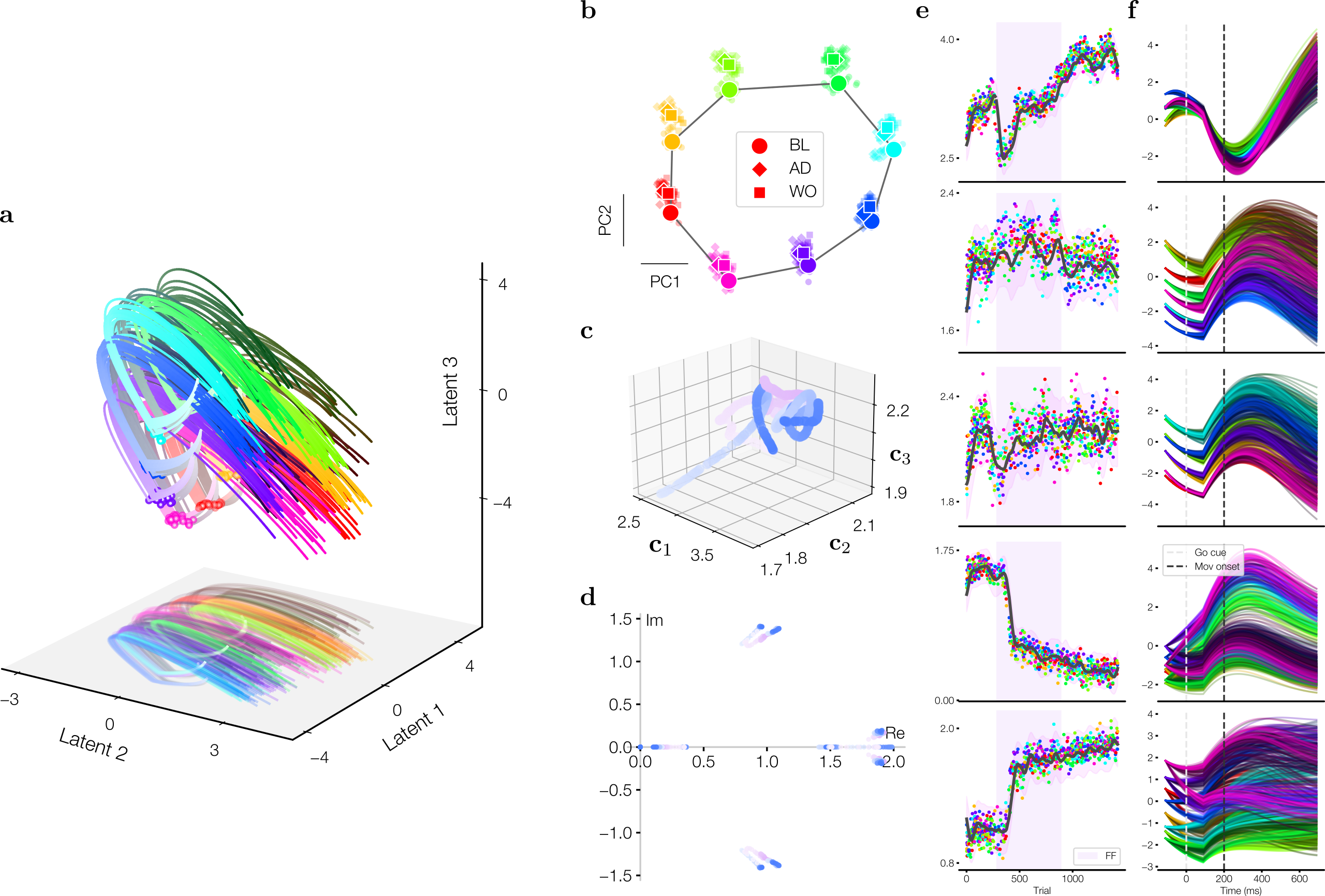}
    \caption{\textbf{Poisson log-likelihood fitting}. \textbf{a}. Projection on first three $\mathbf{a}_j$. \textbf{b}. State of the ltrRNN at go cue $+100$ms. \textbf{c}. First three $\mathbf{c}_j$. \textbf{d}. Eigenspectrum of $W^{(k)}$ over trials. \textbf{e}. Trial factors $\mathbf{c}_j$. \textbf{f}. Projection of $\mathbf{x}^{(k)}(t)$ on the corresponding $\mathbf{a}_j$.}
    \label{sup_fig:perich_poisson}
\end{figure}

\section{Task-trained RNN model of motor learning}

\textbf{Task design}. A trial is split into a preparatory period $t\in[0,T_{go})$ and execution period $[T_{go}, T_{end}]$. Here we set $T_{go}=2$, $T_{end}=4$. To model a ballistic reach, the RNN receives input the target information and a hold cue during the preparatory period. During the execution period it evolves autonomously.
\begin{align}
    \mathbf{dx}^{(k)}=\left [W\phi(\mathbf{x}^{(k)})-\mathbf{x}^{(k)}+\mathds{1}_{t<T_{go}}B_{target}\mathbf{u}^{(k)}_{target}+\mathds{1}_{t<T_{go}}B_{hold}\mathbf{u}^{(k)}_{hold}\right]dt + \sigma \mathbf{dW}
\end{align}
where $B_{target}\in\mathbb{R}^{n\times 2}$, $\mathbf{u}^{(k)}_{target}=[\cos (\theta^{(k)}), \sin(\theta^{(k)})]$ is a static vector representing the position of the target, $B_{hold}\in\mathbb{R}^{n\times 1}$, $\mathbf{u}^{(k)}_{hold}=1$ a cue indicating to hold movement, and $\mathbf{dW}$ the infinitesimal increments of a Wiener process \citep{pavliotis2014stochastic}. The dynamics of the hand are given in section \ref{section:ttRNN} of the main text. The loss is taken to minimize the distance between the hand $\mathbf{y}^{(k)}(t)$ and the target $\mathbf{v}^{(k)}$ throughout the execution period, while keeping the hand still during the preparatory period, that is $L(W, B_{target}, B_{hold}, D) = $
\begin{align}
\frac{1}{K}\sum_{k=1}^K\left( \frac{1}{T_{go}}\int_0^{T_{go}}||\mathbf{y}^{(k)}(t)||^2dt + \frac{1}{T_{end}-T_{go}}\int_{T_{go}}^{T_{end}}||\mathbf{y}^{(k)}(t)-\mathbf{v}^{(k)}||^2dt\right).
\end{align}
In particular, the speed of the reach is only constrained by the noise of the RNN and the hand. The dynamical system as a whole is evaluated with a differentiable adaptive step SDE solver \citep{li2020scalable} and trained with ADAM \citep{kingma2014adam} during initial training, and SGD during motor perturbation learning. 

\textbf{Analysis of the weights}. We found that, consistent with the literature \cite{feulner2022small}, the changes in weights resulting from the initial training were much larger than those of motor perturbation learning. PARAFAC on the full tensor of updates $\mathbf{W}-W_0 \otimes \mathbf{1}$ captured the weight tensor in $3$ components (not shown), whose columns and rows were essentially those of performing SVD on $W^*$. Fitting additional PARAFAC components revealed that residual variability in the updates of pretraining was larger than the motor perturbation learning variability. Furthermore, unlike SVD, there is no guarantee that the components of fitting a rank $k+1$ PARAFAC model will be related to those of fitting a rank $k$ model. Nevertheless, motor perturbation learning had a significant change on the eigenvalues and activity of the RNN (Fig. \ref{section:ttRNN}f,g)

To uncover an upper bound on rank of the weight tensor, we split the analysis into the pre-training and motor perturbation learning. We first performed SVD on the change in weights matrix $W^*-W_0$, where $W^*$ are the weights of the network post-training, but pre-motor perturbation learning (Fig. \ref{fig:task_trained_rnn}b.). We found that the changes in weights were well captured by a rank-$3$ decomposition. Then, we performed PARAFAC on the change in weights tensor $\mathbf{W}-W^* \otimes \mathbf{1}$ of the motor perturbation learning. We found that this change of weight tensor was well approximated by a tensor rank $2$ decomposition. The combination of these results upper-bounds the tensor rank of the overall changes in weights to $5$. Finally, we compared the subspace spanned by the columns of the SVD and PARAFAC decomposition by projecting the weight tensor $\mathbf{W}-W^* \otimes \mathbf{1}$ on the first three column and row singular vectors of $W^*-W_0$ and found approximately a remaining $0.2$ unexplained variance, suggesting that the columns of $\mathbf{W}_0-W^*$ and $W^*$ were not orthogonal, but did not span the same subspace. Combined, these results suggest that the numerical tensor rank is at least $4$ and at most $5$, consistent with the results uncovered by ltrRNN from the RNN activity (Fig. \ref{fig:task_trained_rnn}e).

\section{Task-trained RNN models of additional neuroscience tasks}

To investigate the generality of the low tensor rank framework, we additionally trained RNNs on three non-motor tasks commonly used in neuroscience. 

\textbf{Sensory evidence accumulation task} \citep{zoltowski2020general} (Fig. \ref{sup_fig:ttRNNs}i). The RNN receives a one-dimensional Ornstein-Uhlenbeck (OU) process input whose expected steady-state is either positive or negative. The target output is either +1 or -1 if the mean of the input is positive or negative, respectively. 
 
\textbf{Contextual decision making task} \citep{valente2022extracting} (Fig. \ref{sup_fig:ttRNNs}ii). The RNN receives a three-dimensional input. The first two inputs are independent OU processes as in the previous task. The third input is binary and constant, and determines which of the two stochastic inputs must be integrated. The target output is +1 (or -1) if the mean of the input of the OU process indicated by the contextual input is positive (or negative). 
 
\textbf{Working memory task} \citep{schuessler2020interplay} (Fig. \ref{sup_fig:ttRNNs}iii). The RNN receives a 1-dimensional input consisting of two stimuli of different amplitudes separated by a delay in time.  The target output is the identity of the stimulus (1 or 2) which had larger amplitude. 

Note that, in contrast to the motor adaptation task, these tasks do not contain any baseline period. Therefore, as we could not compare weight updates to a pre-trained solution, we simply analyzed the tensor rank of the weights starting from random initialization. That is, we first trained an RNN on each of these tasks using gradient descent, then used PARAFAC on $\mathbf{W} - W^{(0)}\otimes \mathbf{1}$ to determine the tensor rank of the resulting neuron $\times$ neuron $\times$ iteration tensor of weights over training. In each of these tasks we found that the variance explained indeed saturated at low tensor ranks (at $R=1,4,$ and $3$; Fig. \ref{sup_fig:ttRNNs}e).

\begin{figure}[ht]
    \centering
    \includegraphics[width=\textwidth]{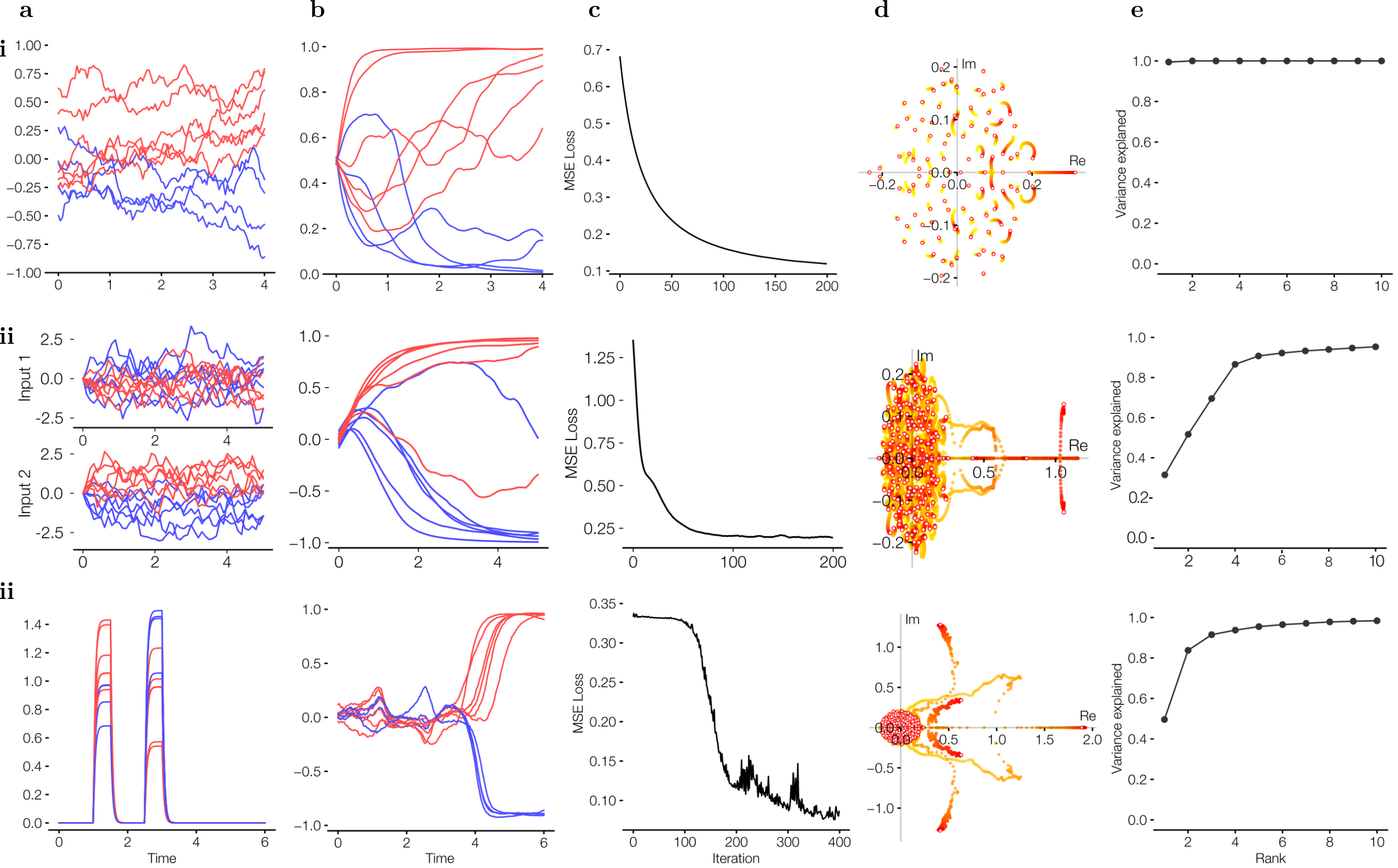}
    \caption{\textbf{Validation of low-tensor-rank framework on RNNs trained on additional tasks.} Each row indicates a different task. 
    \textbf{Row i.} Sensory evidence accumulation.
    \textbf{a.} Example inputs. Each line represents a different trial; red/blue indicate target identity. 
    \textbf{b.} Example outputs after training (for the inputs shown in \textbf{a}). 
    \textbf{c.} Loss curve over iterations. 
    \textbf{d.} Eigenspectrum of $W^{(k)}-W^{(0)}$. Color gradient indicates training iterations $k$ (with $k=1$ in yellow). \textbf{e.} After training, we perform PARAFAC on $\mathbf{W}-W^{(0)}\otimes \mathbf{1}$ to estimate the rank of learning dynamics. 
    \textbf{Row ii.}  Contextual decision making. \textbf{Row iii.} Working memory task.}
    \label{sup_fig:ttRNNs}
\end{figure}

\section{Mathematical results}\label{appendix:mathematical_results}

\subsection{Adjoint derivation}

In this section, we present a derivation of the adjoint mainly following \citep{chen2018neural}. Then, we derive the adjoint of recurrent neural networks. 

\subsubsection{State adjoint}

Consider the dynamical system $\mathbf{\dot x}=f(\mathbf{x}, \boldsymbol{\theta})\in\mathbb{R}^n$ where $\boldsymbol{\theta}\in\mathbb{R}^k$ is a set of parameters. Furthermore, let the functional $L:\mathbb{R}^n \rightarrow \mathbb{R}$ such that $L(\mathbf{x}(T))$ is our loss. First, define the \textit{state adjoint},
\begin{align}
    \mathbf{a}(t) = \frac{dL(\mathbf{x}(T))}{d\mathbf{x}(t)}.
\end{align}
Notice that since $\mathbf{x}(t+\Delta t)$ is a function of $\mathbf{x}(t)$,
\begin{align}
    \mathbf{a}(t) = \frac{dL(\mathbf{x}(T))}{d\mathbf{x}(t+\Delta t)}\frac{d \mathbf{x}(t+\Delta t)}{d\mathbf{x}(t)} = \mathbf{a}(t+\Delta t)\frac{d \mathbf{x}(t+\Delta t)}{d\mathbf{x}(t)} .
\end{align}
By Taylor expanding $\mathbf{x}(t+\Delta t) = \mathbf{x}(t)+\Delta t f(\mathbf{x}(t))+O(\Delta t^2)$, we get,
\begin{align}
    \mathbf{a}(t) = \mathbf{a}(t+\Delta t)(I+\frac{d}{d\mathbf{x}(t)}f(\mathbf{x}(t))+O(\Delta t^2))
\end{align}
where $I$ is the identity matrix. By rearranging,
\begin{align}
    \frac{\mathbf{a}(t+\Delta t)-\mathbf{a}(t)}{\Delta t} = \mathbf{a}(t+\Delta t)\frac{df(\mathbf{x}(t))}{d\mathbf{x}(t)}+O(\Delta t).
\end{align}
Taking the limit as $\Delta t \rightarrow 0$,
\begin{align}\label{adjoint_dynamics}
    \frac{d\mathbf{a}(t)}{dt} = \mathbf{a}(t)\frac{df(\mathbf{x}(t))}{d\mathbf{x}(t)}.
\end{align}
We now have the dynamics of the adjoint; all that remains is that we find an initial (or rather terminal) condition. For this, notice that
\begin{align}
    \mathbf{a}(T) = \frac{dL(\mathbf{x}(T))}{d\mathbf{x}(T)}
\end{align}
is the usual gradient of $L$ w.r.t. to its argument. 

\subsubsection{Parameter adjoint and gradient}
In the above we derived the state adjoint $\mathbf{a}$. However, for our purposes we also require the \textit{parameter adjoint} $dL(\mathbf{x}(T))/d\boldsymbol{\theta}$. For this, it suffices to augment the original dynamical system with its parameters\footnote{Here $[\cdot , \cdot ]$ denotes row concatenation.} $\mathbf{\dot z}=[f(\mathbf{x}, \boldsymbol{\theta}), \boldsymbol{0}]$ and initial (later terminal) condition $\mathbf{z}(0)=[\mathbf{x}(0), \boldsymbol{\theta}]$. Defining the loss $\bar L (\mathbf{z}(T))=L(\mathbf{x}(T))$, the adjoint of this augmented system is,
\begin{align}\label{augmented_adjoint}
    \mathbf{a}_\mathbf{z}(t)&=\frac{d\bar L(\mathbf{z}(T))}{d\mathbf{z}(t)}
    =\left [\frac{d\bar L(\mathbf{z}(T))}{d\mathbf{x}(t)}, \frac{d\bar L(\mathbf{z}(T))}{d\boldsymbol{\theta}} \right] 
    =\left [\frac{d L(\mathbf{x}(T))}{d\mathbf{x}(t)}, \frac{d L(\mathbf{x}(T))}{d\boldsymbol{\theta}} \right] 
    = \left [\mathbf{a}_\mathbf{x}(t), \frac{d L(\mathbf{x}(T))}{d\boldsymbol{\theta}} \right]    ,
\end{align}
which contains the desired term $dL(\mathbf{x}(T))/d\boldsymbol{\theta}$. It now remains to describe the dynamics of the augmented system. By the same argument as for the state adjoint (\ref{adjoint_dynamics}),
\begin{align}
    \frac{d\mathbf{a}_\mathbf{z}(t)}{dt} &= \mathbf{a}_\mathbf{z}(t)\frac{d\mathbf{\dot z}}{d\mathbf{z}} .
    \intertext{By (\ref{augmented_adjoint}) and by unconcatenating $\mathbf{z}$,}
    &= \left [\mathbf{a}_\mathbf{x}(t), \frac{d L(\mathbf{x}(T))}{d\boldsymbol{\theta}} \right] \begin{bmatrix}
    \frac{df(\mathbf{x}(t))}{d\mathbf{x}(t)} & \frac{d\mathbf{x}(t)}{d\boldsymbol{\theta}}
    \\\frac{d \boldsymbol{0}}{d\mathbf{x}(t)} & \frac{d \boldsymbol{0}}{d\boldsymbol{\theta}}
    \end{bmatrix}
    \\&= \left [ \mathbf{a}(t) \frac{df(\mathbf{x}(t))}{\mathbf{x}(t)}, \mathbf{a}(t)\frac{df(\mathbf{x}(t))}{d\boldsymbol{\theta}}\right ] .
\end{align}
Hence the following dynamical system can be evaluated,
\begin{align}\label{full_adjoint_dynamics}
    \frac{d}{dt}\left [\mathbf{x}(t), \mathbf{a}(t), \frac{dL(\mathbf{x}(T))}{d\boldsymbol{\theta}} \right ] &= \left [f(\mathbf{x}(t), \boldsymbol{\theta}), \mathbf{a}(t) \frac{df(\mathbf{x}(t))}{\mathbf{x}(t)}, \mathbf{a}(t)\frac{df(\mathbf{x}(t))}{d\boldsymbol{\theta}}\right ] ,
\end{align}
with terminal condition
\begin{align}
    \left[\mathbf{x}(T), \mathbf{a}(T), \frac{dL(\mathbf{x}(T))}{d\boldsymbol{\theta}}  \right] = \left [\mathbf{x}(T), \frac{dL(\mathbf{x}(T))}{d\mathbf{x}(T)}, \boldsymbol{0} \right] .
\end{align}
Notice that to obtain the terminal condition, since it depends on $\mathbf{x}(T)$, the original dynamical system must be evaluated forward once. 

\subsubsection{RNN adjoint}\label{appendix:rnn_adjoint}
Let $\mathbf{\dot x} = f(\mathbf{x}, W) = W\phi(\mathbf{x})-\mathbf{x}+B\mathbf{u}(t)$ and $L(\mathbf{x}(T))=||D\phi(\mathbf{x}(T))-\mathbf{y}||^2$ for $y\in \mathbb{R}^d$. Furthermore, as it will be convenient, let $W=\sum_i^R \boldsymbol{\alpha}_i \otimes \boldsymbol{\beta}_i$. We will derive one by one the terms needed to characterize the parameter adjoint. First the Jacobian is
\begin{align}
    \frac{df(\mathbf{x}(t), \boldsymbol{\theta})}{d\mathbf{x}(t)}=\sum^R_i \boldsymbol{\alpha}_i \otimes (\boldsymbol{\beta}_i \odot \phi'(\mathbf{x}(t))) - I
\end{align}
where $\odot$ denotes the element-wise product, $I$ the identity matrix, and $\phi'$ the derivative of $\phi$. Next,
\begin{align}
    \frac{df(\mathbf{x}(t), \boldsymbol{\theta})_i}{dW_{jk}}= \begin{cases} 0 &i\neq j \\ \phi(\mathbf{x})_k &i=j
    \end{cases} ,
\end{align}
that is
\begin{align}
    \frac{df(\mathbf{x}(t), \boldsymbol{\theta})}{dW} = I \otimes \phi(\mathbf{x}) .
\end{align}
Finally the terminal condition,
\begin{align}
    \frac{dL(\mathbf{x}(T))}{d\mathbf{x}(T)_i} &= \frac{d}{dx_i}\sum^d_j (D_j\cdot \phi(\mathbf{x})-y_j)^2 = \sum^d_j (D_j\cdot \phi(\mathbf{x})-y_j)(D_{ij}\phi'(\mathbf{x})_i)
\end{align}
that is,
\begin{align}
    \frac{dL(\mathbf{x}(T))}{d\mathbf{x}(T)} = \sum^d_j (D_j\cdot \phi(\mathbf{x})-y_j)(D_j\odot \phi'(\mathbf{x}))=\phi'(\mathbf{x}) \odot \sum^d_j (D_j\cdot \phi(\mathbf{x})-y_j)D_j
\end{align}
Or more explicitly, $\mathbf{\dot a}_{\mathbf{z}}=$
\begin{align}
    \begin{cases}
        \mathbf{\dot a}_{\mathbf{x}}=\left(\sum_i^R\boldsymbol{\alpha}_i \otimes \boldsymbol{\beta}_i \odot \phi'(\mathbf{x})\right)^T\mathbf{a}_\mathbf{x}-\mathbf{a}_\mathbf{x}, & \mathbf{a}_{\mathbf{x}}(T)=\phi'(\mathbf{x}(T))\odot \sum_j^d(D_j \cdot \phi(\mathbf{x}(T))-y_j)D_j
        \\\mathbf{\dot a}_{W} = \mathbf{a}_\mathbf{x} \otimes \phi(\mathbf{x}), & \mathbf{a}_{W}(T) = 0
    \end{cases}
\end{align}

\subsection{Rank of the gradient}

\subsubsection{The gradient as a composition of operators}
In this section, we prove Theorems \ref{thm:rank_gradient}. In order to derive bounds on the singular values of $\nabla_WL= \mathbf{a}_W(0)$, we shall now consider $\mathbf{a}_\mathbf{x}$ and $\phi(\mathbf{x})$ as linear operators with integration. Namely,
\begin{align}
    \mathbf{a}_\mathbf{x} \mathbf{y} := \int_0^T \mathbf{a}_\mathbf{x}(t) y(t)dt
\end{align}
where $\mathbf{y}\in\mathcal{H}$ for $\mathcal{H}$ some suitable Hilbert space, such as $L^2$ for our case.

More formally, let $\mathbf{a}_\mathbf{x},\phi(\mathbf{x})\in \mathcal{B}_{0,0}$, where $\mathcal{B}_{0,0}$ is the Banach space of i) compact ii) bounded operators from $\mathcal{H}$ to $\mathbb{R}^n$, such that $\mathbf{a}_\mathbf{x},\phi(\mathbf{x}):\mathcal{H}\rightarrow \mathbb{R}^n$. In particular, we note that compactness follows from the image of $\mathbf{a}_\mathbf{x}$, that is $\mathbf{a}_\mathbf{x}(\mathcal{H})$, to be a vector subspace of $\mathbb{R}^n$ and therefore be of finite rank. By the same argument, $\phi(\mathbf{x})$ is compact, and therefore so is $\phi(\mathbf{x})^*$ by Schauder's theorem \citep{rudin1973functional}, where $^*$ is adjunction. Furthermore, notice that $\phi(\mathbf{x}),\mathbf{a}_\mathbf{x}$ are solutions of dynamical systems with differentiable right hand side and therefore bounded (in $\mathbb{R}^n$) if they are evaluated for finite time, and therefore bounded when seen as operators. The following result can now be applied:

\begin{lemma}[\citep{rudin1973functional}]
    Let $T\in \mathcal{B}_{0,0}$, then $T$ admits a singular value decomposition. Furthermore, this singular value decomposition is of finite rank.
\end{lemma}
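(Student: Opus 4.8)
The plan is to exploit the fact that the codomain $\mathbb{R}^n$ is finite-dimensional, which renders almost all of the infinite-dimensional machinery unnecessary. First I would observe that for any bounded operator $T:\mathcal{H}\to\mathbb{R}^n$ the range $T(\mathcal{H})$ is a linear subspace of $\mathbb{R}^n$, hence of dimension at most $n$. In particular $T$ is automatically of finite rank, which gives the second assertion (finiteness of the decomposition) as soon as the SVD is in hand, and also re-derives the compactness hypothesis for free: a finite-rank bounded operator maps the unit ball into a bounded subset of a finite-dimensional space, whose closure is compact by Heine--Borel. So the content of the statement is really just the existence of the SVD.

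To construct it, I would pass to the finite-dimensional operator $TT^{*}:\mathbb{R}^n\to\mathbb{R}^n$. Since $T$ is bounded, its Hilbert-space adjoint $T^{*}:\mathbb{R}^n\to\mathcal{H}$ exists and is bounded, and $TT^{*}$ is a positive semidefinite symmetric $n\times n$ matrix. The finite-dimensional spectral theorem then supplies an orthonormal eigenbasis $\{\mathbf{u}_i\}_{i=1}^{n}$ of $\mathbb{R}^n$ with eigenvalues $\lambda_1\ge\cdots\ge\lambda_n\ge 0$. Setting $\sigma_i=\sqrt{\lambda_i}$ and, for the indices with $\sigma_i>0$, defining the right singular vectors $\mathbf{v}_i=\sigma_i^{-1}T^{*}\mathbf{u}_i\in\mathcal{H}$, I expect the expansion $T=\sum_{i:\sigma_i>0}\sigma_i\,\mathbf{u}_i\mathbf{v}_i^{*}$ to be the claimed SVD, with the number of nonzero singular values equal to $\rank T\le n$.

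The only points requiring care --- and the closest thing to an obstacle --- are checking that the $\mathbf{v}_i$ are genuinely orthonormal in $\mathcal{H}$ and that the expansion reproduces $T$ on \emph{all} of $\mathcal{H}$, not merely on $\Span\{\mathbf{v}_i\}$. Orthonormality is a short computation, $\langle\mathbf{v}_i,\mathbf{v}_j\rangle=\sigma_i^{-1}\sigma_j^{-1}\langle T^{*}\mathbf{u}_i,T^{*}\mathbf{u}_j\rangle=\sigma_i^{-1}\sigma_j^{-1}\langle TT^{*}\mathbf{u}_i,\mathbf{u}_j\rangle=\sigma_i^{-1}\sigma_j^{-1}\lambda_i\delta_{ij}=\delta_{ij}$. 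For the reconstruction I would use $\ker T=(\operatorname{ran}T^{*})^{\perp}$ together with the observation that $\|T^{*}\mathbf{u}_i\|^2=\langle TT^{*}\mathbf{u}_i,\mathbf{u}_i\rangle=\lambda_i$, so $T^{*}$ vanishes exactly on $\Span\{\mathbf{u}_i:\sigma_i=0\}$; hence any $\mathbf{h}$ orthogonal to all the $\mathbf{v}_i$ lies in $\ker T$, and both sides agree on the orthogonal splitting $\mathcal{H}=\Span\{\mathbf{v}_i\}\oplus\ker T$. Since this is a classical fact about compact (here finite-rank) operators, I would in the end simply invoke \citep{rudin1973functional}, using the finite-dimensionality of the target only to guarantee that the decomposition terminates after at most $n$ terms.
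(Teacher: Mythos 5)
Your construction is correct, but it takes a different route from the paper: the paper does not prove this lemma at all, it simply invokes it as a standard consequence of the spectral theory of compact operators on Hilbert spaces, citing \citep{rudin1973functional}, with the finite-rank claim following because the image of $T$ sits inside $\mathbb{R}^n$. You instead give a self-contained, purely finite-dimensional argument: since the codomain is $\mathbb{R}^n$, the operator $TT^{*}$ is an $n\times n$ positive semidefinite symmetric matrix, and the ordinary spectral theorem for symmetric matrices produces the left singular vectors, after which $\mathbf{v}_i=\sigma_i^{-1}T^{*}\mathbf{u}_i$ gives the right singular vectors; your orthonormality computation and the reconstruction via $\mathcal{H}=\Span\{\mathbf{v}_i\}\oplus\ker T$ (using $\ker T=(\operatorname{ran}T^{*})^{\perp}$ and the fact that $T^{*}\mathbf{u}_i=0$ whenever $\sigma_i=0$) are both sound. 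Your observation that compactness is automatic here --- any bounded operator into $\mathbb{R}^n$ is finite-rank, hence compact by Heine--Borel --- is also correct and shows that the compactness condition in the definition of $\mathcal{B}_{0,0}$ is redundant in this setting, a point the paper itself half-acknowledges when it notes that compactness "follows from the image of $\mathbf{a}_{\mathbf{x}}(\mathcal{H})$ being a finite-dimensional subspace." What the paper's citation-based route buys is brevity and the ability to quote the general min-max characterization of singular values of compact operators used later in the proof of Theorem 2; what your route buys is that the entire argument reduces to undergraduate linear algebra plus the identity $\ker T=(\operatorname{ran}T^{*})^{\perp}$, with no appeal to infinite-dimensional spectral theory.
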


We can now prove the main theorem.
\begin{proof}[Proof of Theorem \ref{thm:rank_gradient}.]
Notice that the composition of the two operators is: $\mathbf{a}_\mathbf{x} \circ \phi(\mathbf{x})^*= \nabla_W L$. Furthermore, the singular values of $\nabla_W L$ are,
\begin{align}
    \sigma_i^{\nabla_W L}
    = \min_{\substack{U\subseteq \mathbb{R}^n, \\ \dim U = \\n-i-1}} \max_{\substack{\mathbf{y}\in U,\\||\mathbf{y}||=1}} ||\nabla_W L \mathbf{y}|| = \min_{\substack{U\subseteq \mathbb{R}^n, \\ \dim U = \\n-i-1}} \max_{\substack{\mathbf{y}\in U,\\||\mathbf{y}||=1}} ||\mathbf{a}_\mathbf{x} \phi(\mathbf{x})^* \mathbf{y}|| 
\end{align}
which can be bounded as,
\begin{align}
    \min_{\substack{U\subseteq \mathbb{R}^n, \\ \dim U = \\n-i-1}} \max_{\substack{\mathbf{y}\in U,\\||\mathbf{y}||=1}} ||\mathbf{a}_\mathbf{x} \phi(\mathbf{x})^* \mathbf{y}|| 
    &\leq \min_{\substack{U\subseteq \mathbb{R}^n, \\ \dim U = \\n-i-1}} \max_{\substack{\mathbf{y}\in U,\\||\mathbf{y}||=1}} ||\mathbf{a}_\mathbf{x}||||\phi(\mathbf{x})^* \mathbf{y}||
    \\&= \sigma^{\mathbf{a}_\mathbf{x}}_1\min_{\substack{U\subseteq \mathbb{R}^n, \\ \dim U = \\n-i-1}} \max_{\substack{\mathbf{y}\in U,\\||\mathbf{y}||=1}} ||\phi(\mathbf{x})^* \mathbf{y}||
    \intertext{that is,}
    \\\sigma_i^{\nabla_W L} \leq \sigma^{\mathbf{a}_\mathbf{x}}_1\sigma^{\phi(\mathbf{x})^*}_i \label{eq:bound_1}
\end{align}

Now notice that, for any operator, akin to the matrix case, $T_1,T_2$, the adjoint of their composition is $(T_1T_2)^*=T_2^*T_1^*$. Furthermore,
\begin{align}
\sigma_i^{(\nabla_W L)^T} = \min_{\substack{U\subseteq \mathbb{R}^n, \\ \dim U = \\n-i-1}} \max_{\substack{\mathbf{y}\in U,\\||\mathbf{y}||=1}} ||(\mathbf{a}_\mathbf{x}\phi(\mathbf{x})^*)^* \mathbf{y}|| &= \min_{\substack{U\subseteq \mathbb{R}^n, \\ \dim U = \\n-i-1}} \max_{\substack{\mathbf{y}\in U,\\||\mathbf{y}||=1}} ||\phi(\mathbf{x})\mathbf{a}_\mathbf{x}^* \mathbf{y}|| 
\\&\leq \sigma^{\phi(\mathbf{x})}_1\min_{\substack{U\subseteq \mathbb{R}^n, \\ \dim U = \\n-i-1}} \max_{\substack{\mathbf{y}\in U,\\||\mathbf{y}||=1}} ||\mathbf{a}_\mathbf{x}^* \mathbf{y}||
\\&= \sigma^{\phi(\mathbf{x})}_1 \sigma^{\mathbf{a}_\mathbf{x}^*}_i
\end{align}
Noticing that $\sigma^{\mathbf{a}_\mathbf{x}^*}_i=\sigma^{\mathbf{a}_\mathbf{x}}_i$ and $\sigma_i^{\nabla_W L}=\sigma_i^{(\nabla_W L)^T}$,

\begin{align}
    \sigma_i^{\nabla_W L}\leq\sigma^{\phi(\mathbf{x})}_1\sigma^{\mathbf{a}_\mathbf{x}}_i \label{eq:bound_2}
\end{align}

Combining \ref{eq:bound_1} and \ref{eq:bound_2} we obtain the sought upper bound of Theorem \ref{thm:rank_gradient},
\begin{align}
    \sigma_i^{\nabla_W L}\leq \min\left\{\sigma^{\mathbf{a}_\mathbf{x}}_1\sigma^{\phi(\mathbf{x})}_i, \sigma^{\phi(\mathbf{x})}_1\sigma^{\mathbf{a}_\mathbf{x}}_i \right\}
\end{align}

Similar steps can be used to derive the lower bound of Theorem \ref{thm:rank_gradient}, using instead the identity $\sigma_n^{T_1}||T_2\mathbf{y}||\leq ||T_1T_2\mathbf{y}||$ where $T_1,T_2\in\mathcal{B}_{0,0}$ and $\sigma_n^{T_1}$ is the smallest non-zero singular value of $T_1$. 
\end{proof}

We however note that, unlike the upper bound, the lower bound we provide does not have any numerical use, as the smallest singular value of the adjoint or of the firing rate is practically $0$ (for example, well bellow machine precision).

We further point out that a more explicit characterization of the singular values of the gradient can be obtained. Let $U^{\mathbf{a}_\mathbf{x}}_i$, $U^{\phi(\mathbf{x})}_i$ be the right singular vectors of respectively $\mathbf{a}_\mathbf{x}$ and $\phi(\mathbf{x})$, so that $V^{\mathbf{a}_\mathbf{x}}_i(t)$, $V^{\phi(\mathbf{x})}_i(t)$ are the left singular vectors and $\sigma^{\mathbf{a}_\mathbf{x}}_i$, $\sigma^{\phi(\mathbf{x})}_j$ the singular values.
\begin{align}
    \nabla_W L = \sum_{i,j=1}^{n} \left(V^{\mathbf{a}_\mathbf{x}}_i \otimes V^{\phi(\mathbf{x})}_j \right) \sigma^{\mathbf{a}_\mathbf{x}}_i\sigma^{\phi(\mathbf{x})}_j \int_0^T U^{\mathbf{a}_\mathbf{x}}_i(t)U^{\phi(\mathbf{x})}_j(t)dt
\end{align}
The characterization of the rank of the gradient has thus shifted to the firing rate and adjoint spaces. Furthermore, the integral is just the inner product between their right singular vectors and therefore of magnitude bounded by $1$. Hence, for the dynamics of the weights to be large in a given direction in weight space $V^{\mathbf{a}_\mathbf{x}}_i \otimes V^{\phi(\mathbf{x})}_j$, all three of: the singular values of the firing rate, the state adjoint, as well as their cofluctuation in time, must not be small.  

\subsubsection{Weight gradient for time discretizations}
Finally, we mention that our detour through functional analysis was for the sake of mathematical rigour, and that in practical applications, the RNN and its adjoint are evaluated at discrete time steps $0,\Delta t...,q\Delta t$. In that case, the gradient can be estimated as a simple matrix-matrix product. Let $A=[\mathbf{a}_\mathbf{x}(0),...,\mathbf{a}_\mathbf{x}(q\Delta t)]$ and $B=[\phi(\mathbf{x})(0),...,\phi(\mathbf{x})(q\Delta t)]$. Then $\nabla_W L = \Delta t AB^T$, and the bounds $\ref{eq:bound_1}$ and $\ref{eq:bound_2}$ follow from classic matrix-matrix product bounds \citep{horn1991topics}. In particular, given that $\phi(\mathbf{x})$ and $\mathbf{a}_\mathbf{x}$ are smooth, without loss of generality,
\begin{align}
    \sigma_i^{\nabla_W L} = \lim_{\min_j (t_{j+1}-t_j) \rightarrow 0} \sigma_i^{AB^T}
\end{align}

This simple matrix-matrix product opens up the possibility of fast RNN adjoint implementations as, unlike computing $\mathbf{a}_\mathbf{x}\frac{df}{dW}$ in general adjoint solvers, which requires $O(qn^3)$ time and $O(n^2)$ memory complexity for an $n$-dimensional RNN and $q$ time steps, here the time complexity drops to $O(qn^2)$.

\subsubsection{Rank of linear RNNs}

In this section we prove Theorem \ref{cor:rank_lds}.

\begin{proof}[Proof of claim 1.]
    If $W=\sum_i^R \boldsymbol{\alpha}_i \otimes \boldsymbol{\beta}_i$, then by Lemma \ref{lemma:adjoint_rnn}, the dynamics of the adjoint are constrained to the span of the rows of $W$, namely, $\mathbf{\dot a}_{\mathbf{x}}\in \Span\{\boldsymbol{\beta_i}\}$. Therefore, $\mathbf{a}_\mathbf{x}\in \Span\{\boldsymbol{\beta}_i\}\cup\{\mathbf{a}_\mathbf{x}(T)\}$, which is a at most $R+1$ dimensional subspace. If $\mathbf{a}_\mathbf{x}$ is constrained to a at most $R+1$ dimensional subspace, then $\mathbf{\dot a}_W = \mathbf{a}_\mathbf{x}\otimes \mathbf{x}$ is also constrained to a at most $R+1$ dimensional subspace. Since $\mathbf{a}_W(T)=0$, $\mathbf{a}_W$ is constrained to the same subspace as its dynamics, and in particular, $\rank \mathbf{a}_W(0)=\rank \nabla_WL^{(0)}\leq R+1$.
\end{proof} 

\begin{proof}[Proof of claim 2.]
    Suppose $W^{(k)}=\sum_i^{2R+m+d} \boldsymbol{\alpha}_i^{(k)} \otimes \boldsymbol{\beta}_j^{(k)}$, where $\boldsymbol{\alpha}_i^{(k)},\boldsymbol{\beta}_i^{(k)} \in \Span\{\boldsymbol{\alpha}_i^{(0)}\}\cup\{\boldsymbol{\beta}_i^{(0)}\}\cup \{B_i\}\cup\{D_i\}:=V$. In particular, notice that $V$ is only dependent on the initial weight. Then by a similar argument as above, $\mathbf{a}_\mathbf{x}^{(k)}\in V$, which implies $\mathbf{a}_W(0)^{(k)}= \nabla_WL^{(k)} \in V$. Therefore $W^{(k+1)}=W^{(k)}+\gamma \nabla_WL^{(k)}=\sum_i^{2R+m+d} \boldsymbol{\alpha}_i^{(k+1)} \otimes \boldsymbol{\beta}_j^{(k+1)}$ with $\boldsymbol{\alpha}_i^{(k+1)} \in V$. That is $\rank W^{(k+1)}\leq 2R+m+d$.
\end{proof}

\begin{proof}[Proof of claim 3.]
    Mutatis mutandis $\boldsymbol{\beta}^{(k+1)} \in V$, that is $\mathbf{x}\in V$. Therefore, $\mathbf{W}^{(k)}=\sum_{ij}^{2R+m+d} c_{ij}^{(k)}\boldsymbol{\alpha}_i \otimes \boldsymbol{\alpha}_j$ for some $c_{ij}^{(k)}$'s, where $\boldsymbol{\alpha}_i\in V$. Or equivalently, $\mathbf{W}=\sum_{ij}^{2R+m+d}\boldsymbol{\alpha}_i\otimes \boldsymbol{\alpha}_j \otimes \mathbf{c}_{ij}$. That is, $\rank \mathbf{W}\leq (2R+m+d)^2$.
\end{proof}

\subsection{Extensions of our results}

\textbf{Loss integrated over time}. Commonly, the loss considered might be integrated,
\begin{align}
    \mathcal{L}(T) := \int_{0}^{T} L(\mathbf{x}(t), \mathbf{y}(t))dt
\end{align}
The parameter adjoint is dependent only linearly on the state adjoint, we may therefore integrate the state adjoint for all initial conditions.
\begin{align}
    \mathcal{L}(T) =  \left(\int_{T}^{0}\mathbf{a}_\mathbf{x}(t)+\int_t^0 \mathbf{\dot a}_\mathbf{x}(t')dt'\right)dt
\end{align}
The term inside the first integral is just the solution of time-varying autonomous LDS, therefore,
\begin{align}
    \int_{T}^{0}\mathbf{a}_\mathbf{x}(t)dt = \int_{T}^{0}\Phi(0,t)\frac{dL(\mathbf{x}(t), \mathbf{y}(t))}{d\mathbf{x}(t)}dt
\end{align}
Where $\Phi$ is the linear dynamical system state transition matrix \citep{kalman1963mathematical}. But notice that this is the solution to the controlled LDS,
\begin{align}
    \mathbf{\dot a}_\mathbf{x} = \mathbf{a}_{\mathbf{x}}\frac{df(\mathbf{x}(t))}{d\mathbf{x}(t)} + \frac{dL(\mathbf{x}(t), \mathbf{y}(t))}{d\mathbf{x}(t)}, \quad \mathbf{\dot a}_\mathbf{x}(T) = 0
\end{align}
In the specific case of an RNN,
\begin{align}
    \mathbf{\dot a}_\mathbf{x} = \left(\sum_i^R\alpha_i \otimes \beta_i \odot \phi'(\mathbf{x})\right)^T\mathbf{a}_\mathbf{x}-\mathbf{a}_\mathbf{x} + \phi'(\mathbf{x}(t))\odot \sum_j^d(D_j \cdot \phi(\mathbf{x}(t))-\mathbf{y}_j)D_j, \quad \mathbf{\dot a}_\mathbf{x}(T) = 0
\end{align}
Therefore, Theorem \ref{thm:rank_gradient} remains unchanged for a loss integrated over time. For Theorem \ref{cor:rank_lds}, Claims 2-3 remain unchanged, while Claim 1 becomes $\rank \nabla_W L^{(0)} \leq R+\min\{d, m\}$.

\textbf{Gradient with respect to other parameters}. So far we have focused on the gradient of the weights of the RNN. As we have seen, the space over which the state adjoint $\mathbf{a}_\mathbf{x}(t)$ evolves as well as the trajectories of the system itself $\mathbf{x}(t)$ determine the space over which the gradient evolves. But those are respectively dependent on the decoder $D$ and encoder $B$ of the system. If all parameters $D,B,W$ of the system are optimized simultaneously, as is most often the case, we may wonder how our bounds hold.

First, for $B$, notice that by a similar argument as for $W$, $\frac{df(\mathbf{x}, B\mathbf{u}(t))}{dB}=I \otimes \mathbf{u}(t)$, so that $\mathbf{\dot a}_B = \mathbf{a}_\mathbf{x}\otimes \mathbf{u}(t)$. Therefore, following essentially the same derivation as that of Theorem \ref{thm:rank_gradient}, the following bound can be derived, 
\begin{align}
    \sigma_i^{B}\leq \min \{\sigma_1^{\mathbf{a}_\mathbf{x}}\sigma_i^{\mathbf{u}}, \sigma_1^{\mathbf{u}}\sigma_i^{\mathbf{a}_\mathbf{x}}\} .
\end{align}
By a similar argument as for the derivation of Theorem \ref{cor:rank_lds}, $B_i^{(k)}\in \Span\{\alpha_i\}\cup\{\beta_i\}\cup\{B_i\}\cup\{D_i\}$. 

Second, for $D$, since $\frac{df}{dD}=0$, that is $\mathbf{\dot a}_D=0$, 
\begin{align}
    \mathbf{a}_D(T)=\mathbf{a}_D(0)=\nabla_D L = \frac{dL(T)}{dD} .
\end{align}
In other words, the gradient of the loss with respect to the decoder weights have zero dynamics. 

The gradient of a functional w.r.t. a given parameter is independent of the gradient of that functional with respect to another parameter if these two parameters do not depend on one another. Therefore these results hold regardless of which combination of $W$, $B$ or $D$ is optimized.

\textbf{Batched updates}. Most often, the weights are updated in batches. That is $\Delta W^{(k)}=Q^{-1}\sum_q^Q \nabla_W L^{(k,q)}$ where $q$ is the index over the batched dimension. Since the column and row spaces of $\nabla_W L^{(k,q)}$ remain unchanged, Theorem \ref{cor:rank_lds} 2-3. remain unchanged, while 1. becomes $\rank \Delta W^{(0)} \leq R + \min\{d, m\}$. For Theorem \ref{thm:rank_gradient}, the common singular value identities \citep{horn1991topics} $\sigma_{i+j-1}(A+B)\leq \sigma_i(A)+\sigma_j(B)$ and $\sigma_i(cA)=c\sigma_i(A)$ for $c\in \mathbb{R}^+_0$ can be used. Then, $\sigma_{\sum i_q-Q+1}(\Delta W^{(k)})\leq Q^{-1}\sum_q^Q \sigma_{i_q}(\nabla_W L^{(k,q)})$.

\textbf{Momentum-based optimizers}. Momentum-based optimizers such as Adam \cite{kingma2014adam} are commonly used to train RNNs on behavioural tasks. Here we focus on the first moment, a similar derivation can be undertaken for higher moments. In that case, a momentum variable is introduced, which is updated as $M^{(k+1)}=\beta M^{(k)}+(1-\beta)\nabla_WL^{(k)}$, where $\beta$ determines the speed of the exponential decay. The weights are then updated as $W^{(k+1)}=W^{(k)}-\alpha M^{(k+1)}$, where $\alpha$ is the learning rate. Which implies, $\Delta W^{(k)} = W^{(k+1)} -  W^{(k)} = - \alpha \sum_j^k (1-\beta)^j\nabla_WL^{(j)}$. Using the same identities are for batched updates, $\sigma_{\sum i_q-k+1}(\Delta W^{(k)})\leq \sum_j^k (1-\beta)^j\sigma_{i_q}(\nabla_W L^{(j)})$.

\subsection{Numerical simulations}

Similarly to \cite{schuessler2020interplay}, we illustrate our mathematical results on random RNNs. Since constant inputs are one dimensional ($B\mathbf{u}=\sum B_iu_i$), we instead use time-varying inputs parameterized with LDS:
\begin{align}
    \mathbf{\dot u} = M\mathbf{u}-\mathbf{u} \quad \mathbf{u}(0)=\mathbf{u}_0 
\end{align}
where $\mathbf{u}(t)\in \mathbb{R}^m$, $M_{ij}\sim \mathcal{N}(0,1\sqrt{m})$, $\mathbf{u}_0\sim \mathcal{N}(0,\mathbf{1}/2)$. The target outputs are set as $\mathbf{y}\in \mathbb{R}^l$, $\mathbf{y}_i\sim\mathcal{U}(-1,1)$. The loss is defined as,
\begin{align}
    L(W)=\left|\left|D\phi(\mathbf{x})(T) -\mathbf{y}\right|\right|^2
\end{align}
where $\mathbf{x}$ is the solution of an RNN as considered thus far. 

In Sup. Fig. \ref{sup_fig:math_numerical_sim} we show the effect of varying $\phi$, the rank $R$ of the initial weights as well as the standard deviation (or strength) $g$ of the initial weights such that $W^{(0)}_{ij}\sim \mathcal{N}(0, g^2)$. 

\begin{figure}
    \centering
    \includegraphics[width=\textwidth]{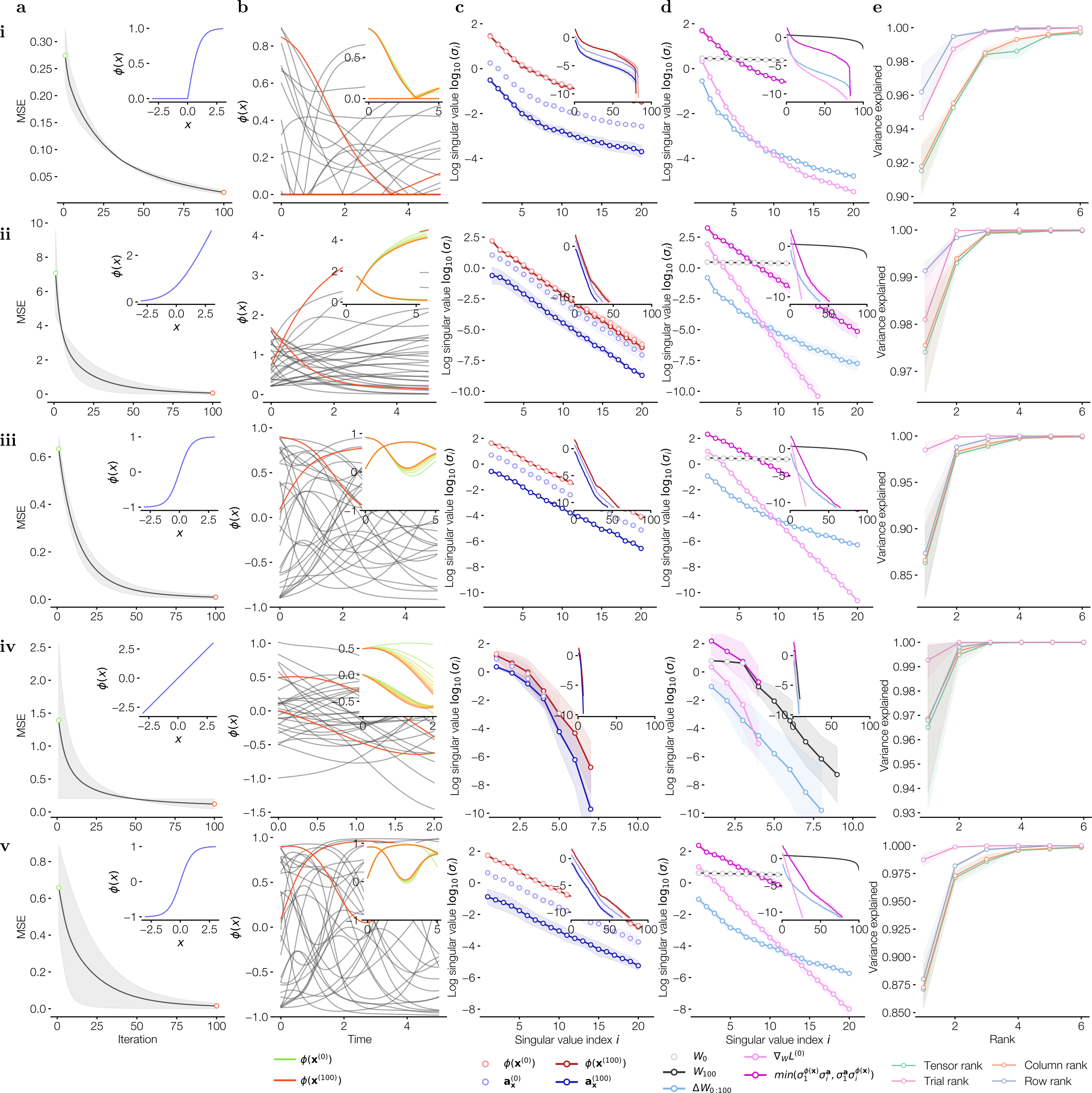}
    \caption{\textbf{Singular values of adjoints and gradients}. \textbf{a}. Loss over training. Inset: activation function. \textbf{b}. Activity during the last trial (black). Inset: activity of two example neurons over training. \textbf{c}. Singular values of the firing rate and adjoint. Inset: additional singular values. \textbf{d}. Singular values of the gradient, weights, and the bound we derive. \textbf{e}. Variance explained per rank of the tensor decomposition of weight tensor $\mathbf{W}-W^{(0)}\otimes \mathbf{1}$. We additionally plot the variance explained over performing matrix decomposition on all possible unfoldings of the weight tensor.
    \\\textbf{Architectures}. Unless noted, the initial weight std was $g=1.5$, the input and output dimensions $m=d=2$. \textbf{i}. Rectified tanh. We found that non-smooth activition functions seem to give the slowest decay of the singular values of the firing rate and adjoint. \textbf{ii-iii}. Softplus and tanh, which are the most common activation functions in neuroscience, have exponentially decaying firing rate and adjoint singular values, and therefore (by Theorem \ref{thm:rank_gradient}) exponentially decaying gradient singular values. \textbf{iv}. Low rank ($R=3$) linear RNN. As per Theorem \ref{cor:rank_lds}, the first gradient step is of rank $R+1=4$. \textbf{v}. Tanh in a chaotic regime ($g=2.1$). Despite being in a chaotic regime, the firing rate, the adjoint, and therefore (by Theorem \ref{thm:rank_gradient}) the gradient have exponentially decaying singular values.}
    \label{sup_fig:math_numerical_sim}
\end{figure}
\end{document}